\documentclass[sigconf,screen,9pt]{acmart}
\usepackage{booktabs}
\usepackage{multirow}
\usepackage{textcomp}
\usepackage{bm}
\usepackage{makecell}
\usepackage{cancel}
\usepackage{subcaption}
\usepackage[ruled,linesnumbered,noresetcount,vlined]{algorithm2e}

\newcommand{\mycommentstyle}[1]{\color{black}{\small #1}}
\SetKwComment{AlgComment}{\mycommentstyle{// }}{}

\usepackage{color}   %导入 color，xcolor 宏包均可

\definecolor{b}{rgb}{0,0,1}

\makeatletter
\newcommand*{\bluestart}{\@ifnextchar\bgroup{\bluestart@}{\color{blue}}}
\newcommand*{\bluestart@}[1]{{\textcolor{blue}{#1}}}
\makeatother

\usepackage{cleveref}

\newcommand{\ID}{\mathrm{ID}}

\newcommand{\OS}{\mathrm{OS}}

\newtheorem{theorem}{Theorem}[section]
\newtheorem{lemma}{Lemma}[section]
\newtheorem{assumption}{Assumption}[section]
\newtheorem{remark}{Remark}[section]
\AtBeginDocument{%
	}

\copyrightyear{2026}
\acmYear{2026}
\setcopyright{cc}
\setcctype{by}
\acmConference[E-Energy '26]{The 17th ACM International Conference on Future and Sustainable Energy Systems}{June 22--25, 2026}{Banff, AB, Canada}
\acmBooktitle{The 17th ACM International Conference on Future and Sustainable Energy Systems (E-Energy '26), June 22--25, 2026, Banff, AB, Canada}
\acmDOI{10.1145/3744255.3811718}
\acmISBN{979-8-4007-2011-6/2026/06}

\begin{document}
	
	%%
	%% The "title" command has an optional parameter,
	%% allowing the author to define a "short title" to be used in page headers.
	\title{Unsupervised Learning for AC Optimal Power Flow with Fast Physics-Aware Layer}
	
	%%
	%% The "author" command and its associated commands are used to define
	%% the authors and their affiliations.
	%% Of note is the shared affiliation of the first two authors, and the
	%% "authornote" and "authornotemark" commands
	%% used to denote shared contribution to the research.
	
	\author{Jiebao Zhang$^\dagger$}
	\email{zhangjb2023@shanghaitech.edu.cn}
	%\orcid{1234-5678-9012}
	\affiliation{
	\institution{ShanghaiTech University}
	\city{Shanghai}
	\country{China}
}

	\author{Haoyu Yan$^\dagger$}
	\email{yanhy2023@shanghaitech.edu.cn}
	\affiliation{%
		\institution{ShanghaiTech University}
		\city{Shanghai}
		\country{China}
	}
	
	\thanks{$^\dagger$The two authors contribute equally to this work}
	\thanks{$^*$Corresponding author}
	
	\author{Zhichao Sheng}
	\email{zcsheng@shu.edu.cn}
%	\email{{zcsheng, hw\_yu}@shu.edu.cn}
	\affiliation{%
		\institution{Shanghai University}
		\city{Shanghai}
		\country{China}
	}
	
	\author{Hongwen Yu}
	%	\email{hw\_yu@shu.edu.cn}
	%\orcid{1234-5678-9012}
	%	\author{Hongwen Yu}
	\email{ hw\_yu@shu.edu.cn}
	\affiliation{%
		\institution{Shanghai University}
		\city{Shanghai}
		\country{China}
	}
	
	\author{Shuang Ye}
	\email{yes@sari.ac.cn}
	\affiliation{%
		\institution{ShanghaiTech University}
		\city{Shanghai}
		%  \state{Ohio}
		\country{China}
	}
	
	\author{Haoyu Wang}
	\email{wanghy@shanghaitech.edu.cn}
	\affiliation{%
		\institution{ShanghaiTech University}
		\city{Shanghai}
		\country{China}
	}
	
	\author{Ye Shi$^*$}
	\email{shiye@shanghaitech.edu.cn}

	\affiliation{%
	\institution{ShanghaiTech University}
	\city{Shanghai}
	%  \state{Ohio}
	\country{China}
}

	%%
	%% By default, the full list of authors will be used in the page
	%% headers. Often, this list is too long, and will overlap
	%% other information printed in the page headers. This command allows
	%% the author to define a more concise list
	%% of authors' names for this purpose.
	\renewcommand{\shortauthors}{Jiebao Zhang et al.}
	
	%%
	%% The abstract is a short summary of the work to be presented in the
	%% article.
	\begin{abstract}
		Learning to solve the Alternating Current Optimal Power Flow (AC-OPF) problem by neural networks (NNs) is a promising approach in real-time applications.
		Existing methods to ensure the physical feasibility of NN outputs embed a power flow (PF) solver within networks.
		However, the gradient through the PF solver, namely, implicit differentiation, needs manual Jacobian derivation and the solution of linear systems, which is computationally prohibitive and hinders integration with modern automatic differentiation (AD) frameworks.
		To address these challenges, we propose FPL-OPF, a novel unsupervised learning framework that incorporates a Fast Physics-aware Layer for AC-OPF problems. FPL-OPF embeds a fast PF iterative solver within the NN and takes solely the last few or even the final iterations into the AD graph. This design ensures high computational efficiency for both the forward and backward passes, circumventing complex custom backward implementations.
		Theoretically, we rigorously prove that the gradient from this design serves as a high-fidelity surrogate of the true implicit gradient under mild conditions.
		Extensive experiments demonstrate that FPL-OPF achieves significant speedups over state-of-the-art unsupervised learning approaches, while maintaining near-zero constraint violations and competitive optimality.
		Our code is available at \url{https://github.com/wowotou1998/fpl-opf}
%		 \footnote{Our code is available at \url{https://github.com/wowotou1998/fpl-opf}}.
	\end{abstract}
	
	%%
	%% The code below is generated by the tool at http://dl.acm.org/ccs.cfm.
	%% Please copy and paste the code instead of the example below.
	%%
%	\begin{CCSXML}
%		<ccs2012>
%		<concept>
%		<concept_id>10010147.10010257.10010293.10010294</concept_id>
%		<concept_desc>Computing methodologies~Neural networks</concept_desc>
%		<concept_significance>300</concept_significance>
%		</concept>
%		</ccs2012>
%	\end{CCSXML}
	
	\ccsdesc[300]{Computing methodologies~Neural networks}
	
	%\ccsdesc[500]{Do Not Use This Code~Generate the Correct Terms for Your Paper}
	%\ccsdesc[300]{Do Not Use This Code~Generate the Correct Terms for Your Paper}
	%\ccsdesc{Do Not Use This Code~Generate the Correct Terms for Your Paper}
	%\ccsdesc[100]{Do Not Use This Code~Generate the Correct Terms for Your Paper}
	
	%%
	%% Keywords. The author(s) should pick words that accurately describe
	%% the work being presented. Separate the keywords with commas.
	\keywords{AC Optimal Power Flow, Fixed-Point Iteration, Implicit Differentiation, Physics-Aware Learning, Unsupervised Learning}
	%% A "teaser" image appears between the author and affiliation
	%% information and the body of the document, and typically spans the
	%% page.
	%\begin{teaserfigure}
	%  \includegraphics[width=\textwidth]{sampleteaser}
	%  \caption{Seattle Mariners at Spring Training, 2010.}
	%  \Description{Enjoying the baseball game from the third-base
		%  seats. Ichiro Suzuki preparing to bat.}
	%  \label{fig:teaser}
	%\end{teaserfigure}
	
	%\received{20 February 2007}
	%\received[revised]{12 March 2009}
	%\received[accepted]{5 June 2009}
	
	%%
	%% This command processes the author and affiliation and title
	%% information and builds the first part of the formatted document.

	\maketitle
	
	\section{Introduction}

	\begin{table*}[ht]
		\centering
		\caption{Comparison of learning-based AC-OPF methods with feasibility enforcement .}
		\label{tab:related_work_table}
		\resizebox{0.95\linewidth}{!}{ 
			\begin{tabular}{lccccc}
				\hline
				\textbf{Method}                                                                           & \textbf{Constraint Satisfaction}    & \textbf{Backward Complexity} & \textbf{Implementation Effort}    & \textbf{Training Efficiency} & \textbf{Inference Speed} \\ \hline
				Supervised NN~\cite{fiorettoPredictingACOptimal2020,huangDeepOPFVSolvingACOPF2022,panDeepOPFFeasibilityOptimizedDeep2023}                                      & Limited (data-dependent)            & No Backward                  & Low                               & High                         & High                     \\
				Penalty-based losses~\cite{huang2021deepopf, huang2024unsupervised, chen2022unsupervised,owerko2024unsupervised, yang2024topology}                       & Soft (violation possible)           & No Backward                  & Low                               & High                         & High                     \\
				Post-processing methods~\cite{panDeepOPFDeepNeural2019,panDeepOPFDeepNeural2021,zamzam2020learning, huangDeepOPFVSolvingACOPF2022} & Moderate (post-projection)          & No Backward                  & Moderate                          & Moderate                     & Moderate                 \\
				DC3~\cite{donti2021dc3}, DeepLDE~\cite{kim2025DeepLDE}, \cite{chen2025PhysicsInformedGradientEstimationAcceleratingDeep}                                                                  & Strong (exact PF embedding)         & High (large Jacobian system) & High (custom backward)            & Low                          & Moderate                 \\
				FRMNet~\cite{hanFRMNetFeasibilityRestoration2024}                                         & Strong (feasibility restoration)    & High (large Jacobian system) & High (custom backward)            & Moderate                     & Moderate                 \\
				OptNet Embedded~\cite{jia2024OptNetEmbeddedDataDrivenApproachOptimalPower}                & Strong (lifted constraints)         & High (large Jacobian system) & High (custom backward)            & Low                          & Moderate                 \\
				QCQP-Net~\cite{zeng2024QCQPNetReliablylearningfeasiblealternating}                        & Strong (relaxed PF equations)       & High (large Jacobian system) & High (custom backward)            & Low                          & Moderate                 \\ \hline
				\textbf{FPL-OPF (ours)}                                                                  & \textbf{Strong (fixed-point layer)} & \textbf{Low (ID approximation)}   & \textbf{Low (no custom backward)} & \textbf{High}                & \textbf{High}            \\ \hline
			\end{tabular}
		}
	\end{table*}
	The Alternating Current Optimal Power Flow (AC-OPF) problem is a cornerstone of modern power system operations, tasked with determining the most economic and secure generator dispatch. 
	The AC-OPF problem is inherently non-convex and NP-hard, making global optimization extremely challenging~\cite{shi2017global, shi2018global}.
	With the increasing penetration of variable renewable energy sources and the growing complexity of the grid, it becomes critically important to solve the non-convex, non-linear AC-OPF problem in real-time. 
	Traditional numerical solvers, while accurate, often struggle to meet the severe time requirements of real-time market operations, which can be as frequent as every five minutes~\cite{diehl2019warm}. This computational bottleneck motivates the development of faster and more efficient methods.
	To this end, the {learning-based} paradigm has emerged as a promising alternative, leveraging deep learning to approximate AC-OPF solutions~\cite{huangDeepOPFVSolvingACOPF2022,panDeepOPFFeasibilityOptimizedDeep2023,zhouDeepOPFFTOneDeep2023}. 
	Supervised learning approaches, which train neural networks (NNs) on large datasets of pre-solved AC-OPF instances, have demonstrated the ability to produce solutions rapidly~\cite{fiorettoPredictingACOptimal2020,huangDeepOPFVSolvingACOPF2022,panDeepOPFFeasibilityOptimizedDeep2023,singh2021learning}. 
	However, their performance is fundamentally limited by the quality and diversity of the training data, which is computationally expensive to generate and may contain sub-optimal solutions due to the problem's non-convexity. 
	Unsupervised learning methods circumvent this data-dependency by directly incorporating the physics of the optimization problem, i.e., the objective and constraints, into the loss function.
	\par 
	A fundamental challenge remains in guaranteeing that the NN outputs strictly satisfy the physical and operational constraints of the power grid.
	Existing approaches address this feasibility issue with some limitations. 
	Penalty-based approaches~\cite{huang2021deepopf, huang2024unsupervised, chen2022unsupervised,owerko2024unsupervised, yang2024topology} integrate constraint violations as soft penalties within the loss function. While straightforward, they inherently fail to guarantee strict feasibility and are highly sensitive to hyperparameter tuning~\cite{huang2024unsupervised}.
	Post-processing methods~\cite{panDeepOPFDeepNeural2019,panDeepOPFDeepNeural2021,zamzam2020learning, huangDeepOPFVSolvingACOPF2022} attempt to project infeasible predictions onto the feasible regions. However, for the non-convex AC-OPF, this projection may introduce significant computational overhead and is detached from the end-to-end training pipeline, limiting the model's ability to learn physics-aware outputs.
	More recently, implicit differentiation layers~\cite{donti2021dc3,chen2025PhysicsInformedGradientEstimationAcceleratingDeep,kim2025DeepLDE,jia2024OptNetEmbeddedDataDrivenApproachOptimalPower,hanFRMNetFeasibilityRestoration2024,zeng2024QCQPNetReliablylearningfeasiblealternating} have been proposed to enforce strict physical constraints by embedding power flow (PF) solvers directly into networks.
	However, these implicit layer-based approaches, while powerful, face some significant practical limitations.
	Even though specialized PF solvers are efficient in the forward process, e.g., fast decoupled power flow (FDPF)~\cite{stott1974FastDecoupledLoadFlow,chen2022unsupervised,chen2025PhysicsInformedGradientEstimationAcceleratingDeep}, the exact implicit gradient is computationally burdensome, as it necessitates solving a large-scale linear system involving the constraint Jacobian at every backward pass~\cite{chen2025PhysicsInformedGradientEstimationAcceleratingDeep}. Second, the implementation is often complex and error-prone, requiring hand-crafted derivations of the Jacobian matrix. This manual process runs contrary to the streamlined workflow of modern automatic differentiation (AD) frameworks, hindering development and broader adoption. 
	\par 
	Fortunately, recent advances in implicit differentiation have shown that approximations can effectively guide network training. Methods such as truncated backpropagation~\cite{shaban2019truncated}, Jacobian-Free~\cite{fungJFBJacobianFreeBackpropagation2022}, phantom gradients~\cite{gengTrainingImplicitModels2022}, and one-step differentiation~\cite{bolteOneStepDifferentiationIterative2023} offer high-fidelity gradients at a fraction of the computational cost and integrate seamlessly into standard AD pipelines, presenting a compelling path forward.
	\par 
	\textbf{Contributions.}
	These challenges motivate our proposal of FPL-OPF, a novel unsupervised learning framework that incorporates a Fast Physics-aware Layer for fast solving AC-OPF problems.
	Our framework is architected around two core design components. 
	For the forward pass, we model the PF balance as a fixed-point problem and embed an efficient FDPF solver as an implicit layer. 
	Unlike standard Newton-Raphson (NR) methods with $\mathcal{O}(n^3)$ complexity per iteration, the FDPF solver achieves a more favorable $\mathcal{O}(n^2)$ complexity, where $n$ denotes the number of buses in the power grid. 
	For the backward pass, we harness standard AD solely for the final, strongly contractive refinement stage, comprising a single NR step or a few FDPF iterations, which circumvents complex custom backward implementations.
	We provide rigorous theoretical justification for this manner, proving that under near-convergence conditions, the approximation implicit gradient calculated by modern AD frameworks maintains a positive cosine similarity with the true one, ensuring it is a high-fidelity descent direction.
	This elegant backward design yields $\mathcal{O}(n^2)$ complexity and leads to a drastic reduction in computational overhead, resulting in a highly efficient end-to-end framework. 
%	Furthermore, our design circumvents complex custom backward implementations by integrating FDPF iterations directly into the computational graph and using AD for only the refinement stage. 
	Extensive experiments on diverse power grids show that FPL-OPF achieves orders-of-magnitude faster inference than the oracle solver and substantially faster to train than state-of-the-art unsupervised approaches, while maintaining near-zero constraint violations and optimality gaps below 3\%. These results validate FPL-OPF as an efficient framework for real-time, non-convex AC-OPF applications.
	\par
	The main contributions of this work are threefold:
	\begin{itemize}
		\item We introduce FPL-OPF, a fast physics-aware and unsupervised learning framework for solving the AC-OPF problem. By embedding an efficient FDPF iterative solver as a fixed-point implicit layer, FPL-OPF achieves $\mathcal{O}(n^2)$ complexity for both the forward and backward passes, offering a substantial reduction compared with existing implicit-layer-based approaches.
		\item We theoretically establish that the approximate implicit gradient obtained from the refinement stage is well aligned with the true gradient under near-convergence conditions, thereby guaranteeing a high-fidelity descent direction for an efficient training pipeline.
		%		 and providing a new theoretical foundation for one-step implicit differentiation in power flow networks.
		\item We empirically validate FPL-OPF on diverse power grid benchmarks, showing that it achieves substantially faster training than state-of-the-art unsupervised approaches and orders-of-magnitude faster inference than the oracle solver, while maintaining near-zero constraint violations and optimality gaps below 3\%.
	\end{itemize}
	\par 
	The rest of the paper is organized as follows. 
	\Cref{sec:related_work} reviews the related work on learning-based AC-OPF with feasibility considerations.
	\Cref{sec:acopf_formulation} introduces the AC-OPF problem formulation.
	\Cref{sec:methodology} details the proposed FPL-OPF framework. \Cref{sec:experiments} reports experimental evaluations on diverse power grids. Finally, \Cref{sec:conclusion} concludes the paper and outlines future research directions.
	
	\section{Related Work}
	\label{sec:related_work}
	Many works, as summarized in \Cref{tab:related_work_table}, have been proposed to enforce the network outputs' feasibility that satisfy the strict physical and operational constraints of power systems, including penalty-based losses, post-processing methods, and implicit differentiation layers.
	\par 
	Penalty-based methods represent a straightforward approach that formulates the violations of constraints, such as the PF balance equations, generator limits, and line thermal ratings, as penalty terms inside the loss function. 
	Kejun et al.~\cite{chen2022unsupervised} utilize a modified augmented Lagrangian function as the training loss, and the multipliers are adjusted dynamically based on the degree of constraint violation.
	DeepOPF-NGT~\cite{huang2021deepopf,huang2024unsupervised} proposes a properly designed loss function and develops an adaptive learning rate algorithm to dynamically balance the gradient contributions from different loss terms during training.
	PG-GNN~\cite{yang2024topology} trains a graph neural network, which takes the Karush-Kuhn–Tucker condition on the Lagrangian function as the training goal.
	However, this approach essentially does not guarantee the feasibility of the network's output. Furthermore, penalty-based methods are highly sensitive to the penalty coefficients~\cite{huang2024unsupervised}.
	\par
	Post-processing methods project the infeasible prediction from the NN onto the feasible region. 
	DeepOPF~\cite{panDeepOPFDeepNeural2019} proposes a quadratic programming projection that finds the closest feasible solution within the polyhedron defined by DC-OPF linear constraints. For the security-constrained DC-OPF problem, Pan et al. ~\cite{panDeepOPFDeepNeural2021} develop an $L_{1}$-projection procedure that recovers feasibility by solving a linear programming problem. In the AC-OPF learning task, Zamzam et al.~\cite{zamzam2020learning} ensure solution feasibility by solving PF equations and adjusting reactive power outputs through a modified PF problem when limits are violated. DeepOPF-V~\cite{huangDeepOPFVSolvingACOPF2022} predicts voltages of all buses and uses them to reconstruct the remaining variables without solving non-linear AC PF equations with a fast post-processing process to enforce the box constraints.
	While effective, the projection itself may introduce significant computational overhead.
	Critically, the post-processing procedure is detached from the end-to-end training pipeline, fundamentally limiting the network's ability to produce inherently physics-aware outputs.
	\par 
	%		A line of work follows the seminal study.
	Implicit differentiation layers enforce the constraint satisfaction via embedding physical constraints directly within the network architecture, which fundamentally preserve the physical characteristics of the output of NNs, \textit{i.e.},  Kirchhoff’s Current Law inside the nonlinear PF equations. By defining a layer as the solution to the PF equations, feasibility can be more rigorously maintained.
	DC3~\cite{donti2021dc3}, a deep learning method for optimization problems with hard constraints that implicitly completes partial solutions to satisfy PF equality constraints and applies differentiable gradient-based corrections to enforce inequality constraints, yielding near-optimal and feasible solutions in AC-OPF applications.
	Similarly, DeepLDE~\cite{kim2025DeepLDE} adopts an equation embedding strategy to strictly satisfy equality constraints by solving the PF equations. Distinct from DC3, DeepLDE enforces inequality constraints through a primal-dual learning scheme rather than applying iterative corrections.
	The work in~\cite{chen2025PhysicsInformedGradientEstimationAcceleratingDeep}  proposes a novel physics-informed gradient estimation that simplifies the computationally intensive backpropagation.
	This is achieved by utilizing a batch-mean approximation of the Jacobian tensor, linearized and decoupled Jacobian models, and a reduced set of branch flow constraints that are most likely to be active.
	FRMNet~\cite{hanFRMNetFeasibilityRestoration2024} guarantees feasibility through a differentiable Feasibility Restoration Mapping block. This optimization block maps the potentially infeasible output from the neural network to the closest point within the feasible region, thereby satisfying the PF equations and other operational constraints.
	OptNet-embedded data-driven approach~\cite{jia2024OptNetEmbeddedDataDrivenApproachOptimalPower} utilizes OptNet layer~\cite{OptNet} as a proxy for the AC-OPF problem. Feasibility is enforced by first lifting the problem to a higher-dimensional space where PF equations are approximated as linear constraints, which are then explicitly included in the embedded optimization layer. 
	QCQP-Net (Quadratically Constrained Quadratic Program)~\cite{zeng2024QCQPNetReliablylearningfeasiblealternating}, which addresses potential infeasibility by solving a relaxed version of the PF equations when an exact solution may not exist for a given NN prediction. This relaxation is formulated as a differentiable non-convex QCQP that minimizes constraint violations, ensuring a stable solution can always be found during training.
	Despite their improved feasibility guarantees, implicit differentiation layers often introduce substantial computational burden during training. 
%	In particular, the backward pass requires differentiating through the embedded physical or optimization layer, which typically involves solving large-scale linear systems associated with the Jacobian of the PF equations or the KKT system of the embedded optimization problem. 
	\section{AC-OPF Problem Formulation}
	\label{sec:acopf_formulation}
	The AC-OPF is a constrained optimization problem aiming to find a power system operating point that minimizes a certain objective function, typically the total active power generation cost, while satisfying both physical laws and operational limits. Mathematically, the problem can be formulated as \eqref{eq:acopf}:
	\begin{subequations}
		\begin{align}
			&\min_{\bm{V}, \bm{\theta}, \bm{P}^{g}, \bm{Q}^{g}} \; \sum_{i} c_{i}(P^{g}_{i}) \label{eq:objective}\\
			\text{s.t.} \; 
			& P^{g}_{i} - P^{d}_{i} = \sum_{(i,j) \in \mathcal{L}} P^{l}_{ij}, \quad \forall i \in \mathcal{N}, \label{eq:pf_p} \\
			& Q^{g}_{i} - Q^{d}_{i} =\sum_{(i,j) \in \mathcal{L}} Q^{l}_{ij}, \quad \forall i \in \mathcal{N}, \label{eq:pf_q} \\ 
			& P_{ij}^l={G}_{ij}V_{i}^{2} - V_{i} V_{j} \left( {B}_{ij} {\sin}\theta_{ij}+{G}_{ij}{\cos}\theta_{ij} \right) \; \forall(i,j)\in\mathcal{L} \label{eq:pl_p} \\
			& Q_{ij}^l=-{B}_{ij}V_{i}^{2} - V_{i} V_{j} \left( {G}_{ij} {\sin}\theta_{ij}-{B}_{ij}{\cos}\theta_{ij} \right) \; \forall (i,j)\in\mathcal{L}  \label{eq:pl_q}\\
			&\underline{P^{g}_{i} }\le P^{g}_{i} \le \overline{P^{g}_{i}}, \quad  \underline{Q^{g}_{i}} \le Q^{g}_{i} \le \overline{Q^{g}_{i}}, \quad \forall i \in \tilde{\mathcal{G}}, \label{eq:gen_pq_bound}\\ 
			&(P_{ij}^{l})^{2} + (Q_{ij}^{l})^{2}  \leq \overline{S_{ij}}, \quad  \forall (i,j) \in \mathcal{L}, \label{eq:line_bound} \\
%			&  \forall i \in \mathcal{N}, \label{eq:v_bound} \\ 
			&\underline{V_{i}} \le V_i \le \overline{V_i}, \quad \underline{\theta_{ij}} \leq  \theta_{ij} \leq \overline{\theta_{ij}}, \quad \forall i \in \mathcal{N}, \forall j \in \mathcal{N}. \label{eq:v_theta_ineq}
		\end{align}
		\label{eq:acopf}
	\end{subequations}
	The objective function \eqref{eq:objective} minimizes the total active power generation cost, where $c_{i}(\cdot)$ denotes the cost function for generator $i$.
	The vector $\bm{y} = [ \bm{V}^\top, \bm{\theta}^\top, (\bm{P}^{g})^\top, (\bm{Q}^{g})^\top ]^{\top}$ represents the complete set of system decision variables.
	\par
	The equality constraints \eqref{eq:pf_p}--\eqref{eq:pl_q}, denoted by $\bm{h}(\bm{y}, \bm{d}) = \bm{0}$, enforce the nodal active and reactive power balance, where $\bm{d} = [  (\bm{P}^{g})^\top, (\bm{Q}^{g})^\top ]^{\top}$ represents the load profile, and $\mathcal{N}$ and $\mathcal{L}$ denote the sets of buses and transmission lines, respectively. The branch power flows $P_{ij}^l$ and $Q_{ij}^l$ are derived from the voltage magnitudes $\bm{V}$, voltage angles $\bm{\theta}$, and the conductance $G_{ij}$ and susceptance $B_{ij}$ of the admittance matrix. 
	\par
	The inequality constraints  \eqref{eq:gen_pq_bound}--\eqref{eq:v_theta_ineq}, denoted by $\bm{g}(\bm{y}) \le \bm{0}$, define the operational limits  within their respective lower ($\underline{\cdot}$) and upper ($\overline{\cdot}$) security bounds. 
	\eqref{eq:gen_pq_bound} constrains the active and reactive outputs for the set of generators $\tilde{\mathcal{G}}$ (within the slack bus) .
	\eqref{eq:line_bound} limits the apparent power flow on line $(i,j)$ to its thermal rating $\overline{S_{ij}}$.
	\eqref{eq:v_theta_ineq} keep voltage magnitudes and phase angle differences.
	\section{Methodology}
	\label{sec:methodology}
	In this section, we present the details of our proposed FPL-OPF framework. We begin by describing the overall architecture of our physics-aware learning model, which embeds the PF equations as a differentiable implicit layer.

	\subsection{FPL-OPF Unsupervised Learning Framework}
	
	\begin{algorithm}[!t]
		\caption{Training of FPL-OPF}\label{alg:primal_dual_training}
		%		\SetAlgoLined
		\KwIn{{Training dataset $\mathcal{S}=\left\{ \bm{d}^{l} \right\}_{l=1}^{n}$,
				outer iterations $\mathcal{T}$, inner iterations $\mathcal{I}$,
				learning rate $\eta_{\bm{\phi}}$, Lagrangian step size $\eta_{\bm{\lambda}}, \eta_{\bm{\nu}}$,
				initial Lagrange multipliers $\bm{\lambda}^0, \bm{\nu}^0$}
		}
		\KwOut{{The optimal NN weights $\bm{\phi}$}}
		\textbf{Initialize} \\
%		\Begin{
			\For{$k = 0,1,..., \mathcal{T}$}{
				\For{$i = 0,1,..., \mathcal{I}$}{
					\For{\textbf{all} $\bm{d} \in \mathcal{S}$}{
						$\bm{x} \gets \mathcal{M}_{\bm{\phi}}(\bm{d})$ \\
						$\bm{z} \gets$ Obtain $\bm{z}$ via Algorithm \ref{alg:hybrid_solver} \\
						$\tilde{\bm{z}} \gets$ Obtain $\tilde{\bm{z}}$ via Post-completion in \Cref{subsec:pf_solver} \\
						$\bm{y} \gets$ Obtain ${\bm{y}}$ via Concatenation of $\bm{x}, \bm{z}$, and $\tilde{\bm{z}}$\\
						$\nabla_{\bm{\phi}} \mathcal{L}(\bm{y},\bm{\lambda}, \bm{\nu}) \gets$ Calculated via \eqref{eq:grad_for_phi} \\
						$\bm{\phi} \gets \bm{\phi} - \eta_{\phi} \nabla_{\bm{\phi}} \mathcal{L}(\bm{y},\bm{\lambda}, \bm{\nu})$
					}
				}
				$\bm{\lambda} \gets \bm{\lambda} +  \eta_{\bm{\lambda}} \sum_{\bm{d} \in \mathcal{S}} \max(\bm{0}, \bm{g}(\bm{y},\bm{d})) $ \\
				$\bm{\nu} \gets \bm{\nu} +  \eta_{\bm{\nu}} \sum_{\bm{d} \in \mathcal{S}} |\bm{h}(\bm{y},\bm{d})| $ \\
%			}
			\Return{$\bm{\phi}$}
		}
	\end{algorithm}
	
	For the parametric version of problem \eqref{eq:acopf}:
	\begin{align*}
		\underset{\bm{y}}{\arg \min} \; f(\bm{y},\bm{d}) \quad \text{s.t. } \bm{h}(\bm{y},\bm{d}) = \bm{0}, \; \bm{g}(\bm{y},\bm{d}) \leq 0.
	\end{align*}
	Our framework trains a deep neural network $\mathcal{M}_{\bm{\phi}}$ parameterized by weights $\bm{\phi}$, to rapidly predict high-quality solutions to the AC-OPF problem in an unsupervised, physics-aware manner. 
	\par
	We formulate the training $\mathcal{M}_{\bm{\phi}}$ as a min-max game on the Lagrangian~\cite{park2023SelfSupervisedPrimalDualLearningConstrainedOptimization,fioretto2020lagrangian,kim2025DeepLDE}, which incorporates the objective function and penalties for equality and inequality constraint violations:
	\begin{equation*}
		\mathcal{L}(\bm{\phi}, \bm{\lambda}, \bm{\nu}, \bm{d})  = f(\bm{y},\bm{d}) + \bm{\lambda}^\top \left(\bm{g}(\bm{y},\bm{d})\right)^{+} + \bm{\nu}^\top |\bm{h}(\bm{y},\bm{d})|,
	\end{equation*}
	where $(\cdot)^{+}$ denotes $\max(\bm{0}, \cdot)$, $|\cdot|$ denotes the element-wise absolute value, $\bm{y} = \mathcal{M}_{\bm{\phi}}(\bm{d})$ is the output of the forward pass, and the dual variables $\bm{\lambda} \in \mathbb{R}^{\text{ineq}}$ and $\bm{\nu} \in \mathbb{R}^{\text{eq}}$ adjusted to penalize any constraint violations.
	The goal is to find the optimal NN weights $\bm{\phi}$ that minimize the loss taken over the dataset $\mathcal{S}$ of load profiles $\bm{d}$:
	\begin{equation*}
		\min_{\bm{\phi}} \max_{ \bm{\lambda} \geq 0, \bm{\nu} \geq 0} \frac{1}{|\mathcal{S}|} \sum_{\bm{d} \in  \mathcal{S} }  \mathcal{L}(\bm{\phi}, \bm{\lambda}, \bm{\nu}, \bm{d}).
	\end{equation*}
	The training is structured as a primal-dual algorithm via alternating gradient-based steps.
	\par
	For the primal update, the NN weights are adjusted using gradient descent on the Lagrangian: 
	\begin{equation}
		\label{eq:grad_for_phi}
		\bm{\phi} \leftarrow \bm{\phi} - \eta_{\bm{\phi}} \nabla_{\bm{\phi}} \mathcal{L}.
	\end{equation}
	For the dual update, the Lagrange multipliers are updated using gradient ascent, effectively increasing the penalties on violated constraints:
	\begin{align*}
		\bm{\lambda} &\leftarrow \bm{\lambda} + \eta_{\bm{\lambda}} \nabla_{\bm{\lambda}} \mathcal{L}(\bm{\phi}, \bm{\lambda}, \bm{\nu}), \\
		\bm{\nu} &\leftarrow \bm{\nu} + \eta_{\bm{\nu}} \nabla_{\bm{\nu}} \mathcal{L}(\bm{\phi}, \bm{\lambda}, \bm{\nu}),
	\end{align*}
	where $\eta_{\bm{\phi}}$, $\eta_{\bm{\lambda}}$, and $\eta_{\bm{\nu}}$ are the learning rates. This iterative process, detailed in Algorithm \ref{alg:primal_dual_training}, drives the NN to produce solutions that are not only low-cost but also feasible with respect to all physical laws and operational limits.
	
	\subsection{Solving Power Flow with a FDPF Fixed-Point Solver}
	\label{subsec:pf_solver}
	To strictly enforce the power flow equations $\bm{h}(\bm{y}, \bm{d}) = \bm{0}$, we classify the system buses into sets of slack buses $\mathcal{R}$, generator buses (without the slack bus) $\mathcal{G}$, and load buses $\mathcal{D}$. The decision vector $\bm{y}$ is partially predicted and partially derived. Specifically, we partition $\bm{y}$ into three subsets $\bm{x}$, $\bm{z}$, and $\tilde{\bm{z}}$ as illustrated in Figure~\ref{fig:fpl-framework}. The neural network $\mathcal{M}_{\bm{\phi}}(\bm{d})$ predicts only the independent controls $\bm{x}$ while the remaining dependent states are resolved via physical constraints.
	\par
	\textit{Prediction Variables.} $\bm{x} = [( P^{g}_{\mathcal{G}} )^\top, (V_{ \mathcal{G} \cup \mathcal{R}})^\top]^\top \in \mathbb{R}^{2|\mathcal{G}|+|\mathcal{R}|}$.
	The vector $\bm{x}$ is the direct output of the neural network $\mathcal{M}_{\bm{\phi}} (\bm{d})$, conditioned on the system load profile $\bm{d}$.
	\par 
	\textit{Completion Variables.} $\bm{z} = [(\theta_{\mathcal{G} \cup \mathcal{D}})^\top, (V_{ \mathcal{D}})^\top ]^\top \in \mathbb{R}^{|\mathcal{G}|+2|\mathcal{D}|}$.
	These are the dependent variables determined by the prediction variables $\bm{x}$ and the load profile $\bm{d}$ through the non-linear PF equations, namely the active power balance constraints \eqref{eq:pf_p} at buses $\mathcal{G}$ and the reactive power balance constraints \eqref{eq:pf_q} at buses $\mathcal{D}$. 
	\par 
	\textit{Post-completion Variables.} $\tilde{\bm{z}} = [ (P^{g}_{\mathcal{R}})^\top, (Q^{g}_{\mathcal{R} \cup \mathcal{G}})^\top ]^\top \in \mathbb{R}^{2|\mathcal{R}|+|\mathcal{G}|}$.
	Once $\bm{z}$ is found, the remaining system variables, \textit{i.e.}, the active power generation  $P^{g}_{\mathcal{R}}$ and the reactive power generation $Q^{g}_{\mathcal{R} \cup \mathcal{G}}$, can be explicitly computed from the power balance equations \eqref{eq:pf_p} at buses $\mathcal{R}$ and equations \eqref{eq:pf_q} at buses $\mathcal{R} \cup \mathcal{G}$ to complete the system state.
	\par 
	The process of finding the completion variables $\bm{z}$ given the prediction variables $\bm{x}$ is formulated as a fixed-point problem, $\bm{z} = T(\bm{z}, \bm{x})$, where $T$ denotes the FDPF iteration.
	We here omit the dependency on the load profile $\bm{d}$ for simplicity.
	FDPF~\cite{monticelliFastDecoupledLoad1990,stott1974FastDecoupledLoadFlow} is a highly efficient variant of the NR method that leverages key physical characteristics of high-voltage transmission systems, i.e., negligible lines’ series resistances and shunt reactances and small voltage angle differences~\cite{buasonAnalysisFastDecoupled2021}. 
	These physical properties allow the full PF Jacobian to be decoupled and approximated by two constant sparse matrices, $\bm{B}'$ and $\bm{B}''$.
	This decouples the PF problem into two smaller and independent linear systems that are solved iteratively:
	$$
	\notag
	\begin{aligned}
		\bm{\theta}^{(k+1)} = \bm{\theta}^{(k)} + [\bm{B}']^{-1} \left( {\Delta \bm{P}^{(k)}}/{\bm{V}^{(k)}} \right), \\
		\bm{V}^{(k+1)} = \bm{V}^{(k)} + [\bm{B}'']^{-1} \left( {\Delta \bm{Q}^{(k)}}/{\bm{V}^{(k)}} \right).
	\end{aligned}
	$$
	Since $\bm{B}'$ and $\bm{B}''$ are constant, they can be computed and factorized only once before the training and inference process begin, hence each FDPF iteration reduces to inexpensive triangular linear solves with the pre-factorization factors.

	\subsection{Fast Differentiable Physics-Aware Layer}
	\label{subsec:forward_pass}
	The main computational bottleneck of physics-aware OPF learning lies in the backward pass through the PF completion.
	From \Cref{subsec:pf_solver}, the loss $\mathcal{L}$ depends on $(\bm{x}, \bm{z}, \bm{\tilde z})$, the gradient with respect to the network parameters $\bm{\phi}$ follows from the chain rule:
	\begin{equation}
		\label{eq:loss_grad_for_phi}
		\frac{\partial \mathcal{L}}{\partial \bm{\phi}}=
		\left(
		\frac{\partial\mathcal{L}}{\partial \bm{x}}+
		\frac{\partial\mathcal{L}}{\partial \tilde{\bm{z}}}\frac{\partial \tilde{\bm{z}}}{\partial \bm{x}}+
		\left( 
		\frac{\partial\mathcal{L}}{\partial \bm{z}^{\star}} + \frac{\partial\mathcal{L}}{\partial \tilde{\bm{z}}}\frac{\partial \tilde{\bm{z}}}{\partial \bm{z}^{\star}} \right) \frac{\partial \bm{z}^{\star}}{\partial \bm{x}}
		\right)
		\frac{\partial \bm{x}}{\partial \bm{\phi}},
	\end{equation}
	which requires the critical Jacobian $\partial \bm{z}^{\star}/\partial \bm{x}$, representing the sensitivity of the PF solution $\bm{z}^{\star}$ to the NN's prediction $\bm{x}$.
		\begin{figure}
		\centering
		\includegraphics[width= \linewidth]{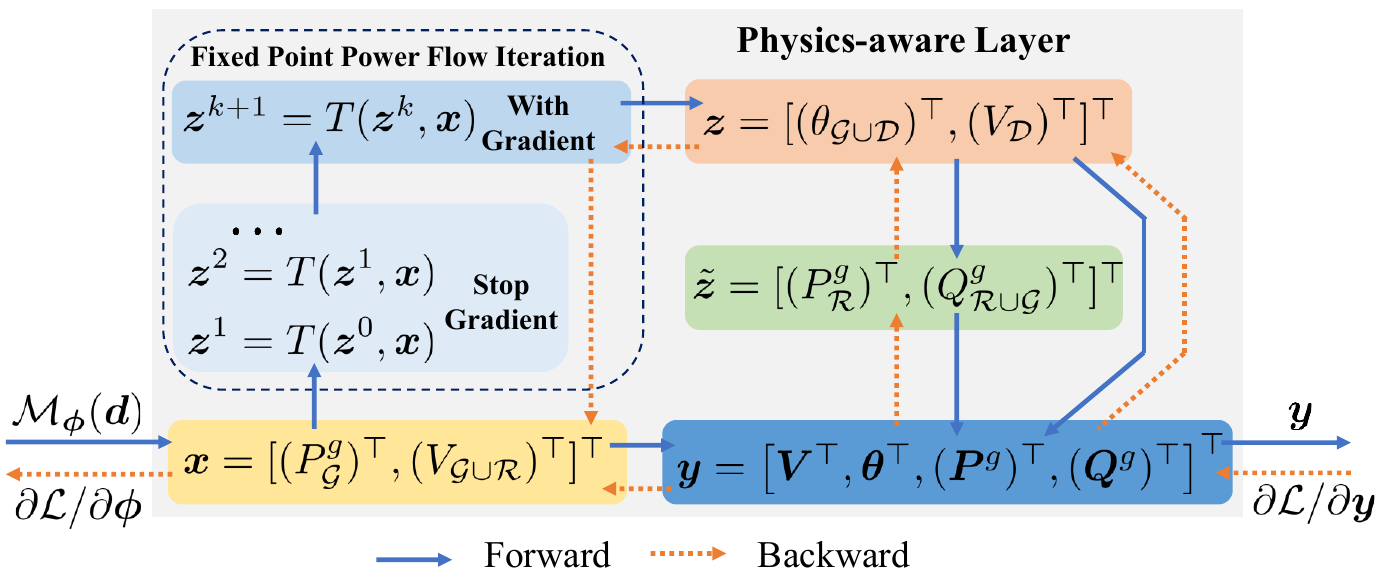}
		\caption{The overall architecture of the Fast Differentiable Physics-Aware Layer.
			This layer embeds a fixed-point PF solver that utilizes a rapid guide phase (Stop Gradient) to approximate the solution, followed by a differentiable refinement step (With Gradient) to efficiently compute gradients.}
		\label{fig:fpl-framework}
	\end{figure}
	Prior methods typically compute this implicit gradient by applying the Implicit Function Theorem (IFT)~\cite{dontchev2009implicit} to the static PF equations, $\bm{h}(\bm{z}, \bm{x}) = \bm{0}$ at the exact satisfaction point $\bm{z}^\star$.
	This yields $\frac{ \partial \bm{z}^\star}{\partial \bm{x}} = - \left[ \frac{ \partial h}{\partial \bm{z}^\star} \right]^{-1} \left(\frac{ \partial h} {\partial \bm{x}} \right)$, but necessitates forming and solving a large-scale linear system involving the PF Jacobian at each training step, creating a significant computational burden.
	Our framework introduces a more efficient and user-friendly alternative by following the contraction property of the FDPF solver originally established in~\cite{wuTheoreticalStudyConvergence1977}:
	\begin{lemma}[Local Contraction of FDPF Iteration~\cite{wuTheoreticalStudyConvergence1977}]
		\label{lemma:fdpf_contraction}
		$\qquad$
		The FDPF fixed-point operator, $\bm{z}^{k+1} = T(\bm{z}^k, \bm{x})$, is continuously differentiable around the power flow solution $\bm{z}^{\star}(\bm{x})$.
		For each fixed input $\bm{x}$, there exists a neighborhood $\mathcal{Z}$ of flat start point (one per unit voltage magnitude with the zero voltage angle) including the solution $\bm{z}^\star(\bm{x})$ such that:
		\newline
		(1) $T(\bm{z}, \bm{x}) \subseteq \mathcal{Z}, \forall \bm{z} \in \mathcal{Z}$;
		\newline
		(2) the Jacobian operator norm $\|J_{\bm{z}} T(\bm{z}, \bm{x})\|_{\mathrm{op}} \le \rho < 1, \forall \bm{z} \in \mathcal{Z}.\qquad$
	\end{lemma}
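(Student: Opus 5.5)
The plan is to write the FDPF update explicitly as a fixed-point map and control its Jacobian at the power-flow solution. Continuity then extends the bound to a small ball, which we take as $\mathcal{Z}$. Concretely, write
\[
T(\bm{z},\bm{x}) = \bm{z} + \bm{M}^{-1}\bm{F}(\bm{z},\bm{x}),
\]
where $\bm{M}=\mathrm{blkdiag}(\bm{B}',\bm{B}'')$ is constant and nonsingular, and $\bm{F}$ stacks the scaled mismatches $\Delta\bm{P}/\bm{V}$ (at buses $\mathcal{G}\cup\mathcal{D}$) and $\Delta\bm{Q}/\bm{V}$ (at buses $\mathcal{D}$).

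\textbf{Smoothness.} The entries of $\bm{F}$ are trigonometric polynomials in $\bm{\theta}$, polynomial in $\bm{V}$, and divided by $V_i$. Hence $T$ is $C^\infty$ on $\{V_i>0\}$, a set that contains the flat start and any physically meaningful solution. This gives continuous differentiability around $\bm{z}^\star(\bm{x})$.

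\textbf{Jacobian at the solution.} At $\bm{z}^\star$ we have $\bm{F}=\bm{0}$, so the quotient-rule terms involving $\partial(1/V_i)$ vanish. This gives
\[
J_{\bm{z}}T(\bm{z}^\star,\bm{x}) = \bm{I} - \bm{M}^{-1}\bm{J}_{s}(\bm{z}^\star),
\]
where $\bm{J}_s$ is the voltage-scaled Newton Jacobian, written in $2\times2$ block form with blocks $\partial P/\partial\theta$, $\partial P/\partial V$, $\partial Q/\partial\theta$, $\partial Q/\partial V$.

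Following Wu's analysis~\cite{wuTheoreticalStudyConvergence1977}, we then split $\bm{J}_s = \bm{M} + \bm{E}$. The diagonal blocks differ from $\bm{B}'$ and $\bm{B}''$ only by terms of order $|G_{ij}|$, $|\sin\theta_{ij}|$, $|1-\cos\theta_{ij}|$ and $|V_i-1|$. The off-diagonal coupling blocks are of order $|G_{ij}|+|\sin\theta_{ij}|$. Consequently
\[
\|J_{\bm{z}}T(\bm{z}^\star,\bm{x})\| = \|\bm{M}^{-1}\bm{E}\| \le \|\bm{M}^{-1}\|\,\epsilon,
\]
where $\epsilon$ quantifies the ratio $r/x$ and the size of the angle spreads and voltage deviations.

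Under the standard FDPF assumptions (small $r/x$, small angle differences, voltages near $1$ p.u., and $\bm{x}$ in a neighbourhood of the flat-start regime), this product is at most some $\rho_0<1$. The norm used is the same one in which $\bm{M}^{-1}$ is controlled; if necessary we pass to a weighted norm to handle the block structure, which is the device used by Wu.

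\textbf{Main obstacle.} This step is the crux: it is the only place where physics enters, and the lemma is genuinely conditional on these assumptions. I would state them explicitly as a hypothesis on the network data and the input $\bm{x}$ rather than attempt a universal bound.

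\textbf{Extension to a neighbourhood.} By continuity of $J_{\bm{z}}T$, pick $\rho\in(\rho_0,1)$ and a radius $r>0$ such that $\|J_{\bm{z}}T(\bm{z},\bm{x})\|\le\rho$ on the closed ball $\mathcal{Z}=\bar B(\bm{z}^\star,r)$. This proves (2).

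For (1), the mean value inequality applied on the convex ball gives
\[
\|T(\bm{z},\bm{x})-\bm{z}^\star\| = \|T(\bm{z},\bm{x})-T(\bm{z}^\star,\bm{x})\| \le \rho\|\bm{z}-\bm{z}^\star\| \le \rho r < r,
\]
so $T(\mathcal{Z},\bm{x})\subseteq\mathcal{Z}$. Here we use $\bm{z}^\star=T(\bm{z}^\star,\bm{x})$.

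Finally, the statement requires $\mathcal{Z}$ to contain the flat start $\bm{z}^0$. We therefore need $\|\bm{z}^0-\bm{z}^\star\|\le r$, that is, the solution must be close to flat start. This is the same near-flat-start regime already invoked in the Jacobian step: small angles and voltages near $1$ p.u. make $\|\bm{z}^\star-\bm{z}^0\|$ of order $\epsilon$. Taking $r$ slightly larger than this distance, while $\epsilon$ is small enough that the Jacobian bound persists on the enlarged ball, closes the argument.
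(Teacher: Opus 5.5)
The paper gives no proof of this lemma. It is imported from Wu (1977), and the very next sentence calls it ``This assumption'', with $\rho$ said to depend on the $R/X$ ratio and the load level. You are right that it cannot hold unconditionally for each fixed $\bm{x}$: at a loadability limit $\bm{J}_s(\bm{z}^\star)$ is singular, so $J_{\bm{z}}T(\bm{z}^\star,\bm{x})=\bm{I}-\bm{M}^{-1}\bm{J}_s$ has eigenvalue $1$ and no $\rho<1$ exists. Your route (bound at $\bm{z}^\star$, extension to a ball, invariance from the mean value inequality on a ball centred at the fixed point) is a reasonable substitute for the citation, and the smoothness and invariance steps are fine.

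Two steps do not close as written.

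\emph{The bound at $\bm{z}^\star$.} The estimate $\|\bm{M}^{-1}\bm{E}\|\le\|\bm{M}^{-1}\|\,\epsilon$ cannot be made $<1$ by a small $R/X$ ratio. Here $\epsilon$ has to be $\|\bm{E}\|$, an absolute quantity that scales with the line admittances; the $\bm{G}$-coupling alone gives $\|\bm{E}\|\sim(R/X)\|\bm{M}\|$. So the product is of order $\kappa(\bm{M})\,R/X$, with $\kappa$ the condition number, which is large for realistic grids. The weighted norm is therefore essential, not a fallback. In the energy norm $\|\bm{u}\|_{\bm{M}}=(\bm{u}^\top\bm{M}\bm{u})^{1/2}$, a Laplacian discrepancy with $\bm{0}\preceq\bm{L}_w\preceq\alpha\bm{B}'$ satisfies $\|\bm{B}'^{-1}\bm{L}_w\|_{\bm{B}'}\le\alpha$. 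The $\bm{G}$-coupling is handled by Cauchy--Schwarz on the bilinear form. Property (2) then holds in that norm, which must also be used in the later lemma and theorem.

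Some terms still escape this argument:
\begin{itemize}
\item With $B''_{ii}=-B_{ii}$, the diagonal of the scaled $\partial\bm{Q}/\partial\bm{V}$ block differs from $\bm{B}''$ by $Q_i/V_i^2\approx-b^{\mathrm{sh}}_i$, the bus and line-charging shunt susceptance. This term is missing from your list of discrepancies. It vanishes neither with $R/X$ nor with loading, so it must enter the network hypothesis.
\item The angle-dependent non-Laplacian terms ($P_i/V_i$ on the diagonals, $B_{ij}\sin\theta_{ij}$ in the coupling) pick up condition-number-type factors even in the energy norm. They must be made small through the loading, not through $R/X$.
\end{itemize}

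\emph{Reaching the flat start.} Continuity of $J_{\bm{z}}T$ at $\bm{z}^\star$ only gives some radius $r$ with no lower bound, so it cannot be compared with $\|\bm{z}^0-\bm{z}^\star\|$. What is needed is a quantitative estimate $\sup_{\bm{z}\in\bar B(\bm{z}^\star,r)}\|J_{\bm{z}}T(\bm{z},\bm{x})\|\le\rho_0+L\,r$, where $L$ is a Lipschitz constant of $J_{\bm{z}}T$ on a fixed compact region around flat start. Away from $\bm{z}^\star$, the quotient-rule terms $\Delta P_i/V_i^2$ and $\Delta Q_i/V_i^2$ that you discarded reappear and contribute to $L$. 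The closing condition is then $\rho_0+L\,\|\bm{z}^0-\bm{z}^\star(\bm{x})\|<1$. Since $R/X$ is fixed network data and $\rho_0$ itself depends on $\bm{z}^\star(\bm{x})$, this can only be met by restricting $\bm{x}$ and the load so that $\bm{z}^\star(\bm{x})$ lies close to flat start.

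With these two repairs, your argument proves the lemma for $\bm{x}$ in such a restricted set, with contraction measured in the energy norm. That is exactly the conditional content the paper takes on trust from Wu.
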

	This assumption implies that for a given input $\bm{x}$, there exists a neighborhood $\mathcal{Z}$ around the solution $\bm{z}^{\star}(\bm{x})$ such that for any iterate $\bm{z} \in \mathcal{Z}$, the next iterate $T(\bm{z}, \bm{x})$ remains in $\mathcal{Z}$. Furthermore, the Jacobian of the operator is bounded, $\|J_{\bm{z}} T(\bm{z}, \bm{x})\|_{\mathrm{op}} = \rho < 1, \forall \bm{z} \in \mathcal{Z}$, ensuring linear convergence to $\bm{z}^{\star}(\bm{x})$ for any initial point within this neighborhood.
	The contraction constant $\rho$ depends on system parameters such as the $R/X$ ratio of transmission lines and the load level, and typically remains sufficiently below $1$ in high-voltage transmission networks~\cite{wuTheoreticalStudyConvergence1977}.
	%\par
	Instead of relying on the static constraint equation, from \Cref{lemma:fdpf_contraction}, we derive the implicit gradient from the fixed point iterative dynamics 	$\bm{z}^{k+1}=T(\bm{z}^k, \bm{x})$, which converges to the solution $\bm{z}^{\star}= T(\bm{z}^\star, \bm{x})$ under standard assumptions~\cite{meiselApplicationFixedPointTechniques1970,wuTheoreticalStudyConvergence1977}. 
	The exact Jacobian ${\partial \bm{z}^{\star} }/{\partial \bm{x}}$ can again be found by applying the IFT at the $\bm{z}^{\star}-T(\bm{z}^{\star}, \bm{x}) = \bm{0}$:
	\begin{small}
		\begin{align}
			\label{eq:ift_fixed_point}
			J^{\ID}_{\bm{x}} \bm{z}^\star(\bm{x}) \triangleq\frac{\partial \bm{z}^\star}{\partial \bm{x}}
			&=
			\left[
			\bm{I} - \frac{\partial T (\bm{z}^{\star}, \bm{x})}{\partial \bm{z}}
			\right]^{-1}
			\left[
			\frac{\partial T (\bm{z}^{\star}, \bm{x})}{\partial \bm{x}}
			\right].
		\end{align}
	\end{small}However, forming and solving with the PF Jacobian at every training step is still computationally costly.
	In this work, our key innovation is to approximate the true implicit gradient by embedding the final fixed-point iteration into the AD graph and reducing implicit differentiation to a one-step operation that can be captured by the AD framework, as shown in Figure~\ref{fig:fpl-framework}.
	Concretely, we execute the $k$ iterations that produce $\bm{z}^k$ outside the computational graph and then insert only the final fixed-point iteration inside the AD computation graph:
	$$
	\bm{z}^{k+1}=T(\bm{z}^k, \bm{x}), \quad 
%	\bm{z}^\star = T(\bm{z}^\star, \bm{x}),
	$$
	where $\bm{z}^k$ is treated as a detached constant. The gradient of the output $\bm{z}^{k+1}$ with respect to the input $\bm{x}$ is then simply the partial derivative of $T$:
	$$
	J^{\OS}_{\bm{x}} \bm{z}^{k+1}\triangleq\frac{\partial T (\bm{z}^k, \bm{x})}{\partial \bm{x}}.
	$$
	This one-step Jacobian $J^{\text{OS}}$ serves as an approximation for the true Jacobian in \eqref{eq:ift_fixed_point}.
	The following lemma establishes a formal error bound for the one-step Jacobian approximation.
	%	The following lemma formally justifies the approximation quality.
	\begin{lemma}[One-step Jacobian Approximation Error~\cite{bolteOneStepDifferentiationIterative2023}]
		\label{lemma:os_id_error}
		Under \Cref{lemma:fdpf_contraction}, let $\bm{x} \mapsto T(\bm{z},\bm{x})$ be $L_T$-Lipschitz continuous and $\bm{z} \mapsto J_{\bm{x}} T(\bm{z}, \bm{x})$ be $L_J$-Lipschitz continuous in operator norm for all $\bm{z} \in \mathcal{Z}$. Then, for the power flow solution $\bm{z}^\star(\bm{x})$ and its $(k+1)$-th iterate $\bm{z}^{k+1}(\bm{x})$, the one-step Jacobian approximation error between $J^{\OS}_{\bm{x}} \bm{z}^{k+1}(\bm{x})$ and $J^{\mathrm{ID}}_{\bm{x}} \bm{z}^\star(\bm{x})$ satisfies:
		\[
		\left\| J^{\OS}_{\bm{x}} \bm{z}^{k+1}(\bm{x}) - J^{\mathrm{ID}}_{\bm{x}} \bm{z}^\star(\bm{x}) \right\|_{\mathrm{op}} \leq \frac{\rho L_T}{1-\rho} + L_J \|\bm{z}^{k} - \bm{z}^\star\|.
		\]
%		where $J^{\OS}_{\bm{x}} \bm{z}^{k+1}(\bm{x}) \triangleq \frac{\partial T(\bm{z}^{k}, \bm{x})}{\partial \bm{x}}$ and $J^{\mathrm{ID}}_{\bm{x}} \bm{z}^\star(\bm{x})$ is the true Jacobian of the fixed point solution with respect to $\bm{x}$.
	\end{lemma}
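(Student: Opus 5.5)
We insert the intermediate quantity $J_{\bm{x}} T(\bm{z}^\star,\bm{x})$ and use the triangle inequality:
\[
\bigl\| J^{\OS}_{\bm{x}} \bm{z}^{k+1} - J^{\ID}_{\bm{x}} \bm{z}^\star \bigr\|_{\mathrm{op}}
\le \bigl\| J_{\bm{x}} T(\bm{z}^k,\bm{x}) - J_{\bm{x}} T(\bm{z}^\star,\bm{x}) \bigr\|_{\mathrm{op}}
+ \bigl\| J_{\bm{x}} T(\bm{z}^\star,\bm{x}) - J^{\ID}_{\bm{x}} \bm{z}^\star \bigr\|_{\mathrm{op}} .
\]
The first term comes from evaluating the one-step Jacobian at an inexact iterate. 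The second is the intrinsic bias of the one-step estimator, present even at the exact fixed point. Since $\bm{z}^k$ and $\bm{z}^\star$ both lie in $\mathcal{Z}$ by \Cref{lemma:fdpf_contraction}(1), the $L_J$-Lipschitz hypothesis on $\bm{z}\mapsto J_{\bm{x}}T(\bm{z},\bm{x})$ bounds the first term by $L_J\|\bm{z}^k-\bm{z}^\star\|$.

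For the second term, write $A = \partial_{\bm{z}} T(\bm{z}^\star,\bm{x})$ and $B = \partial_{\bm{x}} T(\bm{z}^\star,\bm{x})$. By \Cref{lemma:fdpf_contraction}(2), $\|A\|_{\mathrm{op}}\le\rho<1$, so $\bm{I}-A$ is invertible with Neumann series $(\bm{I}-A)^{-1}=\sum_{j\ge0}A^j$. This justifies the implicit formula \eqref{eq:ift_fixed_point}, and it gives
\[
(\bm{I}-A)^{-1}B - B = \sum_{j\ge1}A^jB = A(\bm{I}-A)^{-1}B .
\]
Taking norms gives
\[
\|A(\bm{I}-A)^{-1}B\|_{\mathrm{op}} \le \rho\cdot\frac{1}{1-\rho}\cdot\|B\|_{\mathrm{op}} .
\]
The $L_T$-Lipschitz continuity of $\bm{x}\mapsto T(\bm{z},\bm{x})$ gives $\|B\|_{\mathrm{op}}\le L_T$, using differentiability of $T$ from \Cref{lemma:fdpf_contraction}. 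The second term is therefore at most $\rho L_T/(1-\rho)$, and summing the two bounds proves the claim.

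The main point needing care is the bias term, specifically checking that the operator-norm contraction at $\bm{z}^\star$ suffices for the Neumann series. This holds because $\bm{z}^\star\in\mathcal{Z}$: contraction on the self-mapped set $\mathcal{Z}$ together with Banach's fixed-point theorem gives existence of $\bm{z}^\star$ there. We also need the identification of the Lipschitz constant with a bound on the partial Jacobian, which is standard for $C^1$ maps. As a remark, one can additionally use $\|\bm{z}^k-\bm{z}^\star\|\le\rho^k\|\bm{z}^0-\bm{z}^\star\|$ to show that the first term decays geometrically, leaving only the $O(\rho)$ bias, which is small when $\rho$ is small.
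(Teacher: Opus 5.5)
Your proof is correct and takes essentially the same route as the argument in the cited work of Bolte et al., which the paper invokes rather than reproving: you split through $J_{\bm{x}}T(\bm{z}^\star,\bm{x})$, bound the evaluation error by $L_J\|\bm{z}^k-\bm{z}^\star\|$, and bound the intrinsic bias using $(\bm{I}-A)^{-1}-\bm{I}=A(\bm{I}-A)^{-1}$ together with $\|B\|_{\mathrm{op}}\le L_T$. One minor point: $\bm{z}^\star\in\mathcal{Z}$ is already given by \Cref{lemma:fdpf_contraction}, so your Banach fixed-point detour is unnecessary (it would also need $\mathcal{Z}$ closed and convex to turn the Jacobian bound into a contraction), and $\bm{z}^k\in\mathcal{Z}$ follows from $\bm{z}^0\in\mathcal{Z}$ plus the self-mapping property (1).
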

	\Cref{lemma:os_id_error} provides a formal bound on the approximation error, showing that the one-step Jacobian $J^{\OS}_{\bm{x}} \bm{z}^{k+1}$ converges to the true implicit Jacobian $J^{\ID}_{\bm{x}} \bm{z}^\star(\bm{x})$ as the forward iterate $\bm{z}^{k+1}$ approaches the fixed point $\bm{z}^\star$ and $\rho \to 0^{+}$.
	We use modern AD engines to compute $J^{\OS}_{\bm{x}} \bm{z}^{k+1}$ efficiently without unrolling the entire solver.
	We replace exact sensitivity Jacobian ${\partial \bm{z}^{\star}}/{\partial \bm{x}}$ in \eqref{eq:loss_grad_for_phi} with  $J^{\OS}_{\bm{x}} \bm{z}^{k+1}$, which leads to the following approximate parameter gradient:
	\begin{small}
		\begin{equation}
			\label{eq:grad_for_phi_by_os}
			\widehat{\frac{\partial \mathcal{L}}{\partial \bm{\phi}} }
			=
			\left(
			\frac{\partial\mathcal{L}}{\partial \bm{x}}+
			\frac{\partial\mathcal{L}}{\partial \tilde{\bm{z}}}\frac{\partial \tilde{\bm{z}}}{\partial \bm{x}}+
			\left( 
			\frac{\partial\mathcal{L}}{\partial \bm{z}^{k+1}} + \frac{\partial\mathcal{L}}{\partial \tilde{\bm{z}}}\frac{\partial \tilde{\bm{z}}}{\partial \bm{z}^{k+1}} 
			\right) 
			%			\frac{\partial T (\bm{z}^{k}, \bm{x})}{\partial \bm{x}}
			J^{\OS}_{\bm{x}} \bm{z}^{k+1}
			\right)
			\frac{\partial \bm{x}}{\partial \bm{\phi}}.
		\end{equation}
	\end{small}To theoretically justify the effectiveness of the proposed approximation strategy, in this work, we discuss the directional alignment between the true and approximate gradients with respect to the NN's parameters $\bm{\phi}$.
	%	In this work, we formally analyze the alignment between approximate and true gradients. 
	For notation brevity, we define the intermediate partial gradient with respect to $\bm{x}$ as:
	\begin{equation*}
		\partial_{\bm{x}}\mathcal{L}_{\mathrm{part}} \triangleq \frac{\partial\mathcal{L}}{\partial \bm{x}} + \frac{\partial\mathcal{L}}{\partial \tilde{\bm{z}}}\frac{\partial \tilde{\bm{z}}}{\partial \bm{x}},
	\end{equation*}
	and the intermediate total gradient with respect to $\bm{z}$ as:
	\begin{equation*}
		\partial_{\bm{z}}\mathcal{L}_{\mathrm{total}} \triangleq \frac{\partial\mathcal{L}}{\partial \bm{z}} + \frac{\partial\mathcal{L}}{\partial \tilde{\bm{z}}}\frac{\partial \tilde{\bm{z}}}{\partial \bm{z}}.
	\end{equation*}
	Furthermore, we rely on the following mild assumptions.
	
	\begin{assumption}
		\label{ass:lipschitz_loss_grad}
		There exist constants $L_{\bm{x}}, L_{\bm{z}}\ge 0$ such that for all $\bm{z},\bm{z}'\in\mathcal{Z}$,
		\begin{align*}
			\|\partial_{\bm{x}}\mathcal{L}_{\mathrm{part}}(\bm{x},\bm{z})
			-\partial_{\bm{x}}\mathcal{L}_{\mathrm{part}}(\bm{x},\bm{z}')\|
			&\le L_{\bm{x}}\|\bm{z}-\bm{z}'\|, \\
			\|\partial_{\bm{z}}\mathcal{L}_{\mathrm{total}}(\bm{x},\bm{z})
			-\partial_{\bm{z}}\mathcal{L}_{\mathrm{total}}(\bm{x},\bm{z}')\|
			&\le L_{\bm{z}}\|\bm{z}-\bm{z}'\|.
		\end{align*}
	\end{assumption}
	
	\begin{assumption}
		\label{ass:boundedness}
		There exist constants $C_{\bm{z}} > 0$ such that for all $\bm{z}\in\mathcal{Z}$,
		$\|\partial_{\bm{z}}\mathcal{L}_{\mathrm{total}}(\bm{x},\bm{z})\|\le C_{\bm{z}}$.
	\end{assumption}
	
	\begin{assumption}
		\label{ass:non_vanish_grad}
		There exists a constant $C_{\bm{g}} > 0$ such that the true gradient ${\partial \mathcal{L}}/{\partial \bm{\phi}}$ is non-vanishing, satisfying
		$
		\left\|
		{\partial \mathcal{L}}/{\partial \bm{\phi}}
		\right\|
		\ge C_{\bm{g}} .
		$
	\end{assumption}

\begin{assumption}
	\label{ass:op_norm_bound}
	There exists a constant $\sigma_{\bm{A}}, \sigma_{J}>0$ such that the neural network Jacobian $\bm{A} \triangleq \partial \bm{x}/\partial \bm{\phi}$ is bounded by $\|\bm{A}\|_{\mathrm{op}} \le \sigma_{\bm{A}}$, and the exact physical sensitivity Jacobian $J^{\mathrm{ID}}_{\bm{x}}\bm{z}^{\star}$ is bounded by $\|J^{\mathrm{ID}}_{\bm{x}}\bm{z}^{\star}\|_{\mathrm{op}}\le \sigma_{J}$.
\end{assumption}
			\begin{algorithm}[t]
		\caption{Forward Pass Fixed-Point Solver}
		\label{alg:hybrid_solver}
		%		\SetAlgoLined
		\KwIn{{Load profile $\bm{d}$, prediction variables $\bm{x}$, initial completion variables $\bm{z}^{0}$}}
		\KwOut{{Converged completion variables $\bm{z}^\star$}}
		Initialize $\bm{z} \gets \bm{z}^{0}$ \\
		%		\Begin{
			\AlgComment{stop\_gradient, fast-decoupled iterations}
			\For{$k=0$ to $K_{G}$}{
				$\bm{z}^{k+1} \gets \text{Guide\_Iteration}(\bm{z}^{k},\bm{x})$
			}
			\AlgComment{with\_gradient, final high-precision refinement}
			$\bm{z}^\star \gets \text{Refinement\_Step}(\bm{z}^{K_{G}},\bm{x})$ \\
			\Return $\bm{z}^\star$
			%		}
	\end{algorithm}
	Building on the Jacobian convergence in \Cref{lemma:os_id_error} and mild assumptions above, we further obtain a directional agreement guarantee stated below.
\begin{theorem}[Gradient Directional Alignment]
	\label{thm:gradient_direction_align}
	Assume Lemmas \ref{lemma:fdpf_contraction} and \ref{lemma:os_id_error} and Assumptions \ref{ass:lipschitz_loss_grad}--\ref{ass:op_norm_bound} hold.
	Let $d_0 \triangleq \|\bm{z}_0 - \bm{z}^{\star}\|$, $C_1 \triangleq \rho (L_{\bm{x}} + L_{\bm{z}} \sigma_{J}) + C_{\bm{z}} L_J$, and 
	$\varepsilon_{k}\triangleq\frac{\sigma_{\bm{A}}}{C_{\bm{g}}} \left(  \frac{\rho L_T C_{\bm{z}} }{1-\rho} 	+ \rho^k d_0 C_1 \right)$.
	Then, for every $k$ satisfying $\varepsilon_{k} < 1$, the cosine similarity between the true gradient $\frac{\partial \mathcal{L}}{\partial \bm{\phi}}$ and its approximation gradient $\widehat{\frac{\partial \mathcal{L}}{\partial \bm{\phi}} }$ satisfies:
	$$
	\left \langle 
	\frac{\partial \mathcal{L}}{\partial \bm{\phi}}, 
	\widehat{\frac{\partial \mathcal{L}}{\partial \bm{\phi}} } 
	\right \rangle
	\ge
	\frac{1-\varepsilon_{k}}{1+\varepsilon_{k}}
	> 0.
	$$
\end{theorem}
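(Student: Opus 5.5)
The plan is to bound the relative error between the two gradients and then turn that into a cosine bound. The statement writes $\langle\cdot,\cdot\rangle$ but means the cosine similarity, i.e. the inner product of the normalized vectors.

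Write $g \triangleq \partial\mathcal{L}/\partial\bm{\phi}$ and $\hat g \triangleq \widehat{\partial\mathcal{L}/\partial\bm{\phi}}$. By \eqref{eq:loss_grad_for_phi} and \eqref{eq:grad_for_phi_by_os},
\[
g = \bigl(\partial_{\bm{x}}\mathcal{L}_{\mathrm{part}}(\bm{x},\bm{z}^\star) + \partial_{\bm{z}}\mathcal{L}_{\mathrm{total}}(\bm{x},\bm{z}^\star)\, J^{\ID}_{\bm{x}}\bm{z}^\star\bigr)\bm{A},
\]
\[
\hat g = \bigl(\partial_{\bm{x}}\mathcal{L}_{\mathrm{part}}(\bm{x},\bm{z}^{k+1}) + \partial_{\bm{z}}\mathcal{L}_{\mathrm{total}}(\bm{x},\bm{z}^{k+1})\, J^{\OS}_{\bm{x}}\bm{z}^{k+1}\bigr)\bm{A}.
\]

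\textbf{Step 1: error decomposition.} Subtracting, and adding and subtracting $\partial_{\bm{z}}\mathcal{L}_{\mathrm{total}}(\bm{z}^{k+1})\,J^{\ID}$, gives three pieces:
\[
\hat g - g = \Bigl[\Delta_{\bm{x}} + \Delta_{\bm{z}}\, J^{\ID} + \partial_{\bm{z}}\mathcal{L}_{\mathrm{total}}(\bm{z}^{k+1})\bigl(J^{\OS}-J^{\ID}\bigr)\Bigr]\bm{A}.
\]
Here $\Delta_{\bm{x}}$ and $\Delta_{\bm{z}}$ are the differences of the partial gradients evaluated at $\bm{z}^{k+1}$ versus $\bm{z}^\star$. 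The terms are bounded as follows.
\begin{itemize}
\item Assumption \ref{ass:lipschitz_loss_grad} gives $\|\Delta_{\bm{x}}\|\le L_{\bm{x}}\|\bm{z}^{k+1}-\bm{z}^\star\|$.
\item Assumption \ref{ass:lipschitz_loss_grad} with Assumption \ref{ass:op_norm_bound} gives $\|\Delta_{\bm{z}}J^{\ID}\|\le L_{\bm{z}}\sigma_J\|\bm{z}^{k+1}-\bm{z}^\star\|$.
\item Assumption \ref{ass:boundedness} with \Cref{lemma:os_id_error} gives a bound on the last term of $C_{\bm{z}}\bigl(\tfrac{\rho L_T}{1-\rho} + L_J\|\bm{z}^k-\bm{z}^\star\|\bigr)$.
\end{itemize}
All iterates stay in $\mathcal{Z}$ by \Cref{lemma:fdpf_contraction}(1). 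Item (2) of that lemma and the mean value theorem make $T(\cdot,\bm{x})$ a $\rho$-contraction on $\mathcal{Z}$, with $\bm{z}^\star$ as its fixed point. Hence $\|\bm{z}^k-\bm{z}^\star\|\le\rho^k d_0$ and $\|\bm{z}^{k+1}-\bm{z}^\star\|\le\rho^{k+1}d_0$. Collecting terms and using $\|\bm{A}\|_{\mathrm{op}}\le\sigma_{\bm{A}}$:
\[
\|\hat g-g\| \le \sigma_{\bm{A}}\Bigl(\tfrac{\rho L_T C_{\bm{z}}}{1-\rho} + \rho^k d_0\bigl[\rho(L_{\bm{x}}+L_{\bm{z}}\sigma_J)+C_{\bm{z}}L_J\bigr]\Bigr).
\]
This matches $C_1$ exactly. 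Dividing by $\|g\|\ge C_{\bm{g}}$ (Assumption \ref{ass:non_vanish_grad}) gives $\|\hat g-g\|\le\varepsilon_k\|g\|$.

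\textbf{Step 2: cosine bound.} This is the elementary step, though the constant must be checked. Write $\hat g = g+e$ with $\|e\|\le\varepsilon\|g\|$, where $\varepsilon=\varepsilon_k<1$. Then $\langle g,\hat g\rangle \ge \|g\|^2-\|g\|\|e\|\ge(1-\varepsilon)\|g\|^2$, and $\|\hat g\|\le(1+\varepsilon)\|g\|$. Therefore
\[
\cos(g,\hat g) \ge \frac{(1-\varepsilon)\|g\|^2}{\|g\|\,(1+\varepsilon)\|g\|} = \frac{1-\varepsilon}{1+\varepsilon} > 0.
\]
(The sharper bound $\sqrt{1-\varepsilon^2}$ also holds, but the stated one is all we need.)

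\textbf{Main obstacle.} The delicate point is Step 1. We must make sure all Lipschitz assumptions are applied at points inside $\mathcal{Z}$ and at the same $\bm{x}$. We must also be careful with the index convention: $d_0=\|\bm{z}_0-\bm{z}^\star\|$, and the $\rho^k$ factor comes from $\bm{z}^k$, the input of the differentiated step, while the extra factor $\rho$ in $C_1$ comes from $\bm{z}^{k+1}$.
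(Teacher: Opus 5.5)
Your proposal is correct and follows essentially the same route as the paper's proof. It uses the same three-term split $\Delta\bm{v} = (\bm{a}^{k+1}-\bm{a}^{\star}) + \bm{b}^{k+1}(\bm{J}^{k+1}-\bm{J}^{\star}) + (\bm{b}^{k+1}-\bm{b}^{\star})\bm{J}^{\star}$, the same contraction estimates $\rho^{k+1}d_0$ and $\rho^{k}d_0$ that produce exactly $C_1$, and the same cosine estimate from $\langle g,\hat g\rangle \ge \|g\|\,(\|g\|-\|\hat g-g\|)$ and $\|\hat g\|\le\|g\|+\|\hat g-g\|$. The only cosmetic difference is ordering: you first convert to the relative bound $\|\hat g-g\|\le\varepsilon_k\|g\|$, whereas the paper bounds the cosine by $(\|g\|-\sigma_{\bm{A}}\|\Delta\bm{v}\|)/(\|g\|+\sigma_{\bm{A}}\|\Delta\bm{v}\|)$ and substitutes $\|g\|\ge C_{\bm{g}}$ only at the end.
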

\begin{proof}
	The proof is provided in Appendix \ref{app:proof_gradient_direction_align}.
\end{proof}

\begin{remark}[Asymptotic Alignment Lower Bound]
	The error bound $\varepsilon_{k}$ can be decoupled into an irreducible asymptotic approximation error $\varepsilon_{\infty} \triangleq \frac{\sigma_{\bm{A}}}{C_{\bm{g}}} \frac{\rho L_T C_{\bm{z}} }{1-\rho}$ and a forward pass error $\Delta \varepsilon_{k} \triangleq \frac{\sigma_{\bm{A}}}{C_{\bm{g}}}\rho^k d_0 C_1$.
	As the number of fixed-point iterations $k \to \infty$, $\Delta \varepsilon_{k}$ decays geometrically to zero at the rate of $\rho^k$. Consequently, provided that $\rho$ is sufficiently small to ensure $\varepsilon_{\infty} < 1$, the asymptotic cosine similarity satisfies:
	$$
	\liminf_{k\to\infty}
	\cos\left( 
	\frac{\partial \mathcal{L}}{\partial \bm{\phi}}, \widehat{\frac{\partial \mathcal{L}}{\partial \bm{\phi}} } 
	\right)
	\ge
	\frac{1-\varepsilon_{\infty}}{1+\varepsilon_{\infty}} > 0.
	$$
	
	This result yields a crucial theoretical insight that the asymptotic alignment is characterized by the irreducible term $\varepsilon_{\infty}$.

\end{remark}
	This result formally establishes that, once the forward solver is sufficiently close to convergence, the one-step gradient is directionally aligned with the exact implicit gradient. 
	This ensures that it provides a valid descent direction for training.
	This approach integrates seamlessly with standard AD frameworks, circumventing the cumbersome and error-prone manual derivation of implicit differentiation procedures.
	The one-step differentiation schema of this method yields substantial computational savings by differentiating only a single, shallow computation.
		\begin{table}[t]
		\centering
		\caption{Complexity Analysis of Differentiation Methods. $n = |\mathcal{G}| + 2 |\mathcal{D}|$, $m = 2|\mathcal{G}|+|\mathcal{R}|$, and $\omega$ is the multiplicative overhead of gradient evaluation relative to function evaluation.}
		\label{tab:complexity_analysis}
		\resizebox{0.9\linewidth}{!}{
				\begin{tabular}{@{}llcc@{}}
					\toprule
					& \textbf{Method} & \textbf{Forward} & \textbf{Backward} \\ \midrule
					\multirow{3}{*}{\textbf{Time}} & Implicit Diff. & $\mathcal{O}(k n^3)$ & $\mathcal{O}(nm + n^2)$ \\
					& Single NR & $\mathcal{O}(K_G n^2 + n^3)$ & $\mathcal{O}(\omega n^2)$ \\
					& K-Step FDPF & $\mathcal{O}((K_{G}+K_{R}) n^2)$ & $\mathcal{O}(K_{R} \omega n^2)$ \\ \midrule
					\multirow{3}{*}{\textbf{Memory}} & Implicit Diff. & $\mathcal{O}(n^2)$ & $\mathcal{O}(n^2)$ \\
					& Single NR & $\mathcal{O}(n^2)$ & $\mathcal{O}(n^2)$ \\
					& K-Step FDPF & $\mathcal{O}(n^2)$ & $\mathcal{O}(n^2)$ \\ \bottomrule
				\end{tabular}
	}
	\end{table}
	However, the FDPF method exhibits linear convergence, and its contraction constant $\rho$ may not be small enough to satisfy $\varepsilon_{k} < 1$ in \Cref{thm:gradient_direction_align}. To address this, we can compose the operator $T$ with itself~\cite{bolteOneStepDifferentiationIterative2023}. If we define a new operator $\tilde{T} = T^{K}$, i.e., applying $T$ for $K$ iterations, this single composite operator will have a stronger contraction constant of $\tilde{\rho} = \rho^K$. By choosing a sufficient but not large $K$, we can make $\tilde{\rho}$ arbitrarily small, thus satisfying the theorem's condition. In this case, the one-step estimator applied to $\tilde{T}$ becomes a $K$-step estimator on $T$, where we differentiate through the final $K$ steps of the algorithm. This concept is central to our hybrid solver design, which we detail in the next subsection and quantify empirically in \Cref{subsec:impact_kr}.
	
	\subsection{Hybrid Iterative Solver for Accelerated Training}
	\label{subsec:hybrid_solver}
	The forward pass solver must be both fast for rapid training and accurate enough to satisfy the near-convergence assumptions of our differentiation approach. A key requirement is ensuring the operator used in the final differentiable step has a sufficiently small contraction constant $\rho$ to meet the condition $\varepsilon_{k} < 1$ in \Cref{thm:gradient_direction_align}. To achieve this, we design a two-phase hybrid solver that combines the strengths of different numerical methods.
	
	\textbf{Rapid Guide Phase with Non-Differentiable FDPF}.
	The first phase rapidly guides the completion variables $\bm{z}$ from an initial guess into the neighborhood of the correct solution. We employ a fixed number of $K_{G}$ FDPF iterations. Since the FDPF method uses constant, pre-factorized matrices, these iterations are computationally inexpensive. This phase is performed outside the AD computation graph, e.g., within a \texttt{torch.no\_grad()} context, incurring no memory overhead for backpropagation and efficiently bringing the solution close to the convergence point.
	
	\textbf{High-Precision Phase with Differentiable Refinement}.
	Following the guide phase, the second phase refines the solution to high precision using a differentiable operator with a strong contraction property. This is the only part of the solver executed within the gradient tracking context, e.g., \texttt{torch.enable\_grad()}. We consider two effective strategies for this refinement step:
	\subsubsection{Single NR Step} The NR method exhibits quadratic convergence near the solution, meaning its effective contraction constant $\rho$ is extremely close to zero. A single NR step thus serves as a powerful refinement that easily satisfies the condition in \Cref{thm:gradient_direction_align}. While highly effective, one NR step requires forming and factorizing the full PF Jacobian, which can be computationally intensive.
	\subsubsection{Multi-Step FDPF Block} As discussed previously, we can use a block of $K$ FDPF iterations as the differentiable operator. The composite operator $T^K$ has a contraction constant of $\rho^K$. By choosing a suitable $K$, we can ensure $\rho^K$ is small enough to satisfy the condition in \Cref{thm:gradient_direction_align}. This approach avoids the costly Jacobian formation of the NR step, relying only on repeated and efficient triangular linear solves with the pre-factored FDPF matrices.
	\par
	The entire hybrid procedure is summarized in \Cref{alg:hybrid_solver}. This architecture ensures the mathematical conditions for our efficient backpropagation method are met while leveraging domain-specific solvers for speed, leading to an accelerated and stable training process.
	
	\subsection{Time and Computation Cost Analysis}
	\label{subsec:time-analysis}
	We analyze the time and space complexities of the forward and backward processes for three strategies solving the PF equations: traditional Implicit Differentiation, and two variants of our proposed FPL-OPF framework distinguished by their refinement stage, i.e., employing a Single NR step versus a K FDPF steps.
	
	Let $n = |\mathcal{G}| + 2 |\mathcal{D}|$ be the dimension of the completion variables $\bm{z}$ and $m = 2|\mathcal{G}|+|\mathcal{R}|$ be the dimension of the prediction variables $\bm{x}$.
	Let $k$ be the number of forward solver iterations.
	Let $C_h$ denote the time cost of evaluating the PF residual function $\bm{h}(\bm{z},\bm{x})$, which is typically $\mathcal{O}(|\mathcal{N}|^2)$ for dense admittance matrices.
	Let $C_{J_{\bm{z}}}$ denote the time cost of constructing the Jacobian $J_{\bm{z}} \bm{h}(\bm{z},\bm{x})$ with respect to $\bm{z}$, which has a time complexity of $\mathcal{O}(n^2)$ due to its $n \times n$ size. Let $C_{J_{\bm{x}}}$ denote the time cost of constructing the Jacobian $J_{\bm{x}} \bm{h}(\bm{z},\bm{x})$ with respect to $\bm{x}$, which has a time complexity of $\mathcal{O}(n m)$ due to its $n \times m$ size.
	Let $C_T$ denote the time cost of a NR fixed-point iteration $\bm{z}^{k+1} = \bm{z}^{k} - [J_{\bm{z}} \bm{h}(\bm{z}^{k}, \bm{x})]^{-1} \bm{h}(\bm{z}^{k}, \bm{x})$. Typically, the NR iteration costs $\mathcal{O}(n^2)$ when using pre-factored matrices, while the factorization for $J_{\bm{z}} \bm{h}(\bm{z}^{k}, \bm{x})$ costs $\mathcal{O}(n^3)$. Let $\omega > 0$ be the multiplicative overhead of gradient evaluation relative to function evaluation, which typically satisfies $\omega \leq 5$ by the cheap gradient principle~\cite{baur1983complexitypartialderivatives}. For simplicity, we do not account for batching effects.
	
	\textbf{Single NR}. The forward pass combines an FDPF iteration with a single NR refinement step.
	The FDPF phase performs $K_G$ iterations, where each iteration computes the active and reactive power mismatches, involving two residual evaluations with a total costing $C_h$, followed by solving two reduced linear systems using pre-factored decomposition of $\bm{B}'$ and $\bm{B}''$ with reduced dimensions approximately $n/2$.
	The pre-factorization costs $\mathcal{O}((n/2)^3)$ but is performed once outside the iteration procedure, while each solve costs $\mathcal{O}((n/2)^2)$.
	Thus, the FDPF forward time complexity is $\mathcal{O}(K_G ( C_h + (n/2)^2))$, which simplifies to $\mathcal{O}(K_G n^2)$ under the assumption that $C_h \leq n^2$.
	After the FDPF loop, a single NR step follows at cost $\mathcal{O}(n^3)$ since the need for a factorization. The total forward time complexity is therefore $\mathcal{O}(K_G n^2 + n^3)$. Space complexity is $\mathcal{O}(n^2)$ for storing the Jacobian and its factors, plus $\mathcal{O}(n)$ for the iterates.
	
	The backward propagation computes the gradient of the loss $\mathcal{L}$ with respect to inputs $\bm{x}$ by differentiating through the single NR update,  the FDPF output $\bm{z}^{k}$ detached from the computation graph. This leads to a backward time complexity of $\mathcal{O}(\omega C_T)$ since the cheap gradient principle.
	Space complexity remains $\mathcal{O}(n^2)$ for the stored Jacobian.
	
	\textbf{Implicit Differentiation} \cite{donti2021dc3,kim2025DeepLDE}. The forward pass performs full NR iterations on the reduced system.
	Total forward time complexity is $\mathcal{O}(k (n^3 + n^2))$ up to $k$ iterations, which simplifies to $\mathcal{O}(k n^3)$ when inversion dominates.
	Space complexity is $\mathcal{O}(n^2)$ for storing the final Jacobian and its inverse.
	The backward pass uses implicit differentiation on the PF equation $\bm{h}(\bm{z}^\star,\bm{x}) = \bm{0}$, yielding:
	$$\nabla_{\bm{x}} \mathcal{L} = -\left[ J_{\bm{x}} \bm{h}(\bm{z}^\star, \bm{x}) \right]^\top \bm{u} \;\; \text{where} \;\;
	\left[ J_{\bm{z}} \bm{h}(\bm{z}^\star,\bm{x}) \right]^\top \bm{u} = \nabla_{\bm{z}} \mathcal{L}.$$
	This requires computing $J_{\bm{x}} \bm{h}(\bm{z}^\star,\bm{x})$ at cost $\mathcal{O}(nm)$, and solving the linear system at cost $\mathcal{O}(n^2)$ by reusing the factorization from the final forward iteration. Total backward time complexity is $\mathcal{O}(nm + n^2)$, with space complexity $\mathcal{O}(n^2)$ for stored Jacobians and inverses.
	
	\textbf{K-Step FDPF}. The forward pass consists of $K_{G}$ guide FDPF steps and $K_{R}$ differentiable FDPF steps. Since FDPF matrices are constant and pre-factored, each iteration costs only $\mathcal{O}(n^2)$. The total forward time is therefore $\mathcal{O}((K_{G}+K_{R})n^2)$. The backward pass differentiates through the final $K_{R}$ FDPF iterations. This is equivalent to unrolling the loop $K_{R}$ times, with each step backpropagation costing $\mathcal{O}(\omega n^2)$. The total backward time is $\mathcal{O}(K_{R}\omega n^2)$. The main overhead is storing these matrix factors, leading to a space complexity of $\mathcal{O}(n^2)$, which does not grow significantly during backpropagation through the $K_{R}$ steps.
	
	The complexity analysis is summarized in \Cref{tab:complexity_analysis}. The comparison reveals the significant advantages of the K-Step FDPF refine strategy. Both the ID and Single NR refine strategy suffer from an $\mathcal{O}(n^3)$ time complexity in the forward pass due to Jacobian factorization for the NR steps. In contrast, the K-Step FDPF method avoids this bottleneck entirely, achieving a much more favorable $\mathcal{O}(n^2)$ forward time complexity. While its backward pass cost scales with the number of unrolled steps $K_{R}$, this is a tunable hyperparameter that can be kept small in practice. For large-scale power systems where $n$ is large, the K-Step FDPF method can offer the most computationally efficient choice for both forward and backward passes without compromising the theoretical guarantees for gradient validity.

	\section{Experiments}
	\label{sec:experiments}
	
	%		eq max mismatch, eq mean mismatch, Eq. Viol. Num.
	\begin{table*}[htbp]
		\centering
		\caption{Performance comparison of different learning methods on the IEEE 57-bus, PEGASE 89-bus, IEEE 118-bus, and NESTA 189-bus systems. Bold numbers indicate the best result for each metric, excluding the MIPS Solver baseline.}
		\label{tab:performance_comparison}
		\resizebox{0.85\linewidth}{!}{ 
		\begin{tabular}{lcccccccc}
			\toprule
			Method & \makecell[c]{Eq. Mean\\Mismatch} & \makecell[c]{Eq. Max.\\Mismatch} & \makecell[c]{Eq. Viol.\\Num.} & \makecell[c]{Ineq. Mean\\Mismatch} & \makecell[c]{Ineq. Max\\Mismatch} & \makecell[c]{Ineq. Viol.\\Num.} & \makecell[c]{Objective\\Cost (\$)} & \makecell[c]{Objective\\Gap (\%)} \\
			\midrule
			\multicolumn{9}{c}{\textbf{IEEE 57-bus system}} \\
			\midrule
			MIPS Solver & 0.00e+00 & 0.00e+00 & 0.00e+00 & 0.00e+00 & 0.00e+00 & 0.00e+00 & 41659.62 & 0.00\% \\
			\cline{2-9}
			DC3 & 3.37e-08 & 1.97e-06 & \textbf{0.00e+00} & 3.64e-07 & 1.10e-04 & \textbf{0.00e+00} & 41696.90 & 0.09\% \\
			OPF-DNN & 0.0670 & 0.7387 & 85.97 & \textbf{0.00e+00} & \textbf{0.00e+00} & \textbf{0.00e+00} & \textbf{39725.79} & \textbf{-4.39\%} \\
			DeepLDE & \textbf{1.70e-15} & \textbf{1.47e-14} & \textbf{0.00e+00} & 3.69e-07 & 1.12e-04 & \textbf{0.00e+00} & 41672.69 & 0.03\% \\
			FPL-OPF (NR) & 4.16e-11 & 6.39e-10 & \textbf{0.00e+00} & 4.49e-07 & 1.35e-04 & \textbf{0.00e+00} & 41670.51 & 0.02\% \\
			FPL-OPF (K-Decoupling) & 2.85e-09 & 4.55e-08 & \textbf{0.00e+00} & 4.52e-07 & 1.37e-04 & \textbf{0.00e+00} & 41674.59 & 0.03\% \\
			\midrule
			\multicolumn{9}{c}{\textbf{PEGASE 89-bus system}} \\
			\midrule
			MIPS Solver & 0.00e+00 & 0.00e+00 & 0.00e+00 & 0.00e+00 & 0.00e+00 & 0.00e+00 & 5794.07 & 0.00\% \\
			\cline{2-9}
			DC3 & 6.18e-06 & 2.07e-04 & \textbf{0.00e+00} & 2.06e-04 & 0.0507 & 4.17 & 5799.94 & 0.10\% \\
			OPF-DNN & 2.42 & 32.56 & 177.00 & 0.4046 & 130.60 & 3.00 & \textbf{3223.58} & \textbf{-44.29\%} \\
			DeepLDE & 9.54e-09 & 1.14e-06 & \textbf{0.00e+00} & 1.12e-04 & 0.0333 & 2.24 & 5807.48 & 0.23\% \\
			FPL-OPF (NR) & 5.63e-09 & 4.10e-07 & \textbf{0.00e+00} & 1.13e-04 & 0.0333 & 2.25 & 5807.11 & 0.23\% \\
			FPL-OPF (K-Decoupling) & \textbf{1.37e-09} & \textbf{6.22e-08} & \textbf{0.00e+00} & \textbf{1.11e-04} & \textbf{0.0327} & \textbf{2.23} & 5807.35 & 0.23\% \\
			\midrule
			\multicolumn{9}{c}{\textbf{IEEE 118-bus system}} \\
			\midrule
			MIPS Solver & 0.00e+00 & 0.00e+00 & 0.00e+00 & 0.00e+00 & 0.00e+00 & 0.00e+00 & 129673.19 & 0.00\% \\
			\cline{2-9}
			DC3 & 6.41e-06 & 6.49e-04 & \textbf{0.00e+00} & 5.02e-07 & 4.10e-04 & 0.0180 & 130090.96 & 0.32\% \\
			OPF-DNN & 0.0245 & 0.3829 & 141.22 & \textbf{0.00e+00} & \textbf{0.00e+00} & \textbf{0.00e+00} & \textbf{126730.01} & \textbf{-2.24\%} \\
			DeepLDE & \textbf{7.06e-14} & \textbf{6.05e-12} & \textbf{0.00e+00} & 2.53e-08 & 2.09e-05 & 0.0010 & 129749.35 & 0.06\% \\
			FPL-OPF (NR) & 2.77e-12 & 1.70e-10 & \textbf{0.00e+00} & 3.69e-08 & 3.04e-05 & 0.0010 & 129748.15 & 0.06\% \\
			FPL-OPF (K-Decoupling) & 2.59e-09 & 1.38e-07 & \textbf{0.00e+00} & 9.92e-08 & 8.18e-05 & 0.0020 & 129757.46 & 0.06\% \\
			\midrule
			\multicolumn{9}{c}{{\textbf{NESTA 189-bus system}}} \\
			\midrule
			MIPS Solver & 0.00e+00 & 0.00e+00 & 0.00e+00 & 0.00e+00 & 0.00e+00 & 0.00e+00 & 849.67 & 0.00\% \\
			\cline{2-9}
%			DC3 & 0.6828 & 26.54 & 311.64 & 0.6331 & 111.29 & 38.94 & 901.67 & 6.34\% \\
			OPF-DNN & 0.6828 & 26.54 & 311.64 & 0.6331 & 111.29 & 38.94 & 901.67 & 6.34\% \\
			DeepLDE & \textbf{4.09e-14} & \textbf{3.09e-12} & \textbf{0.00e+00} & 5.44e-05 & 0.0252 & 1.17 & 988.50 & 16.43\% \\
			FPL-OPF (NR) & 6.53e-10 & 4.34e-08 & \textbf{0.00e+00} & 1.54e-06 & 6.58e-04 & 0.0215 & 885.60 & 4.30\% \\
			FPL-OPF (K-Decoupling) & 1.35e-11 & 1.41e-09 & \textbf{0.00e+00} & \textbf{5.21e-07} & \textbf{2.81e-04} & \textbf{0.0075} & \textbf{861.75} & \textbf{1.48\%}
			\\ 	
			\bottomrule
		\end{tabular}}
	\end{table*}

	\begin{figure*}[t]    
		\centering
		
		\begin{subfigure}{0.245 \textwidth}
			% 请先 usepackage{subcaption}
			% subfigure 环境用于在 figure* 环境中插入子图。 
			% [b] 指定子图的基线对齐方式为底部对齐
			% {0.48\textwidth} 设置子图的宽度为页面宽度的 48%
			
			\centering
			\includegraphics[width=  \textwidth]{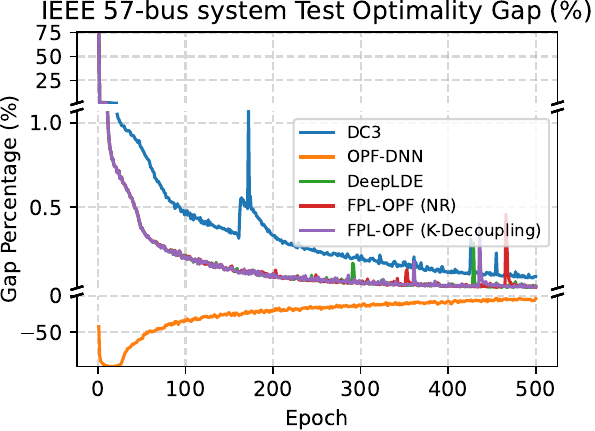}
			\caption{\footnotesize Optimality comparison at IEEE 57-bus system.}
			\label{fig:opt_gap_57}
		\end{subfigure}
		\begin{subfigure}[b]{0.245 \textwidth}
			\centering
			\includegraphics[width=  \textwidth]{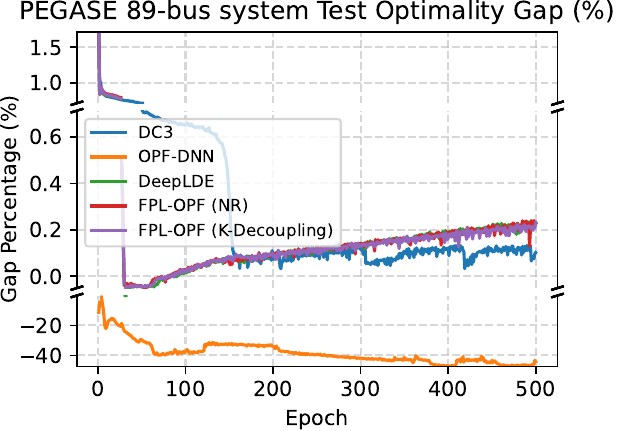}
			\caption{\footnotesize Optimality comparison at PEGASE 89-bus system.}
			
			\label{fig:opt_gap_89}
		\end{subfigure}
		\begin{subfigure}[b]{0.245 \textwidth}
			\centering
			\includegraphics[width=  \textwidth]{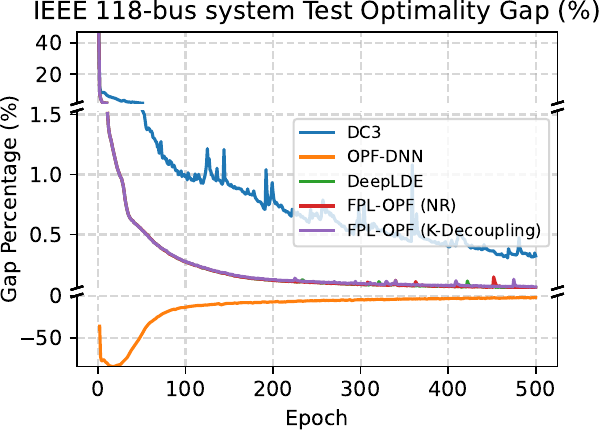}
			\caption{\footnotesize Optimality comparison at IEEE 118-bus system.}
			\label{fig:opt_gap_118}
		\end{subfigure}
				\begin{subfigure}[b]{0.245 \textwidth}
			\centering
			\includegraphics[width=  \textwidth]{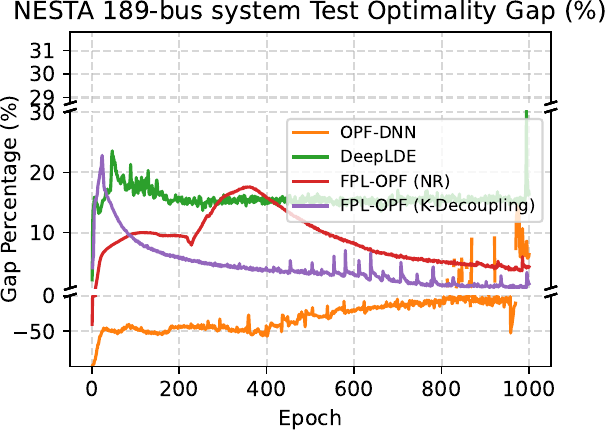}
			\caption{\footnotesize Optimality comparison at the NESTA 189-bus system.}
			\label{fig:opt_gap_189}
		\end{subfigure}
		\caption{Optimality comparison on (a) IEEE 57-bus, (b) PEGASE 89-bus, (c) IEEE 118-bus, and (d) NESTA 189-bus systems. 
%			Solid curves denote the mean optimality gap, and shaded bands denote one standard deviation over the test set.
			}    
		\label{fig:opt_gap}
	\end{figure*}

	\begin{figure*}[t]    
		\centering
		
		\begin{subfigure}[b]{0.22 \textwidth}
			% 请先 usepackage{subcaption}
			% subfigure 环境用于在 figure* 环境中插入子图。 
			% [b] 指定子图的基线对齐方式为底部对齐
			% {0.48\textwidth} 设置子图的宽度为页面宽度的 48%
			
			\centering
			\includegraphics[width= \textwidth]{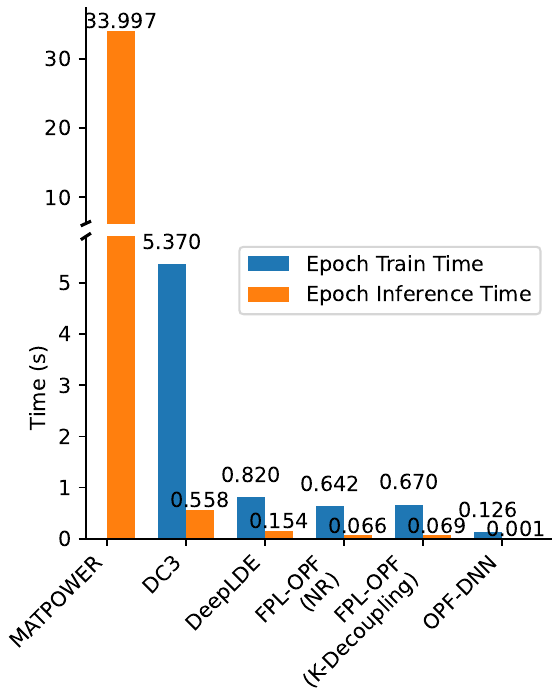}
			\caption{\footnotesize Time comparison at IEEE 57-bus system.}
			\label{fig:time_case57}
			
		\end{subfigure}
		\begin{subfigure}[b]{0.22 \textwidth}
			\centering
			\includegraphics[width= \textwidth]{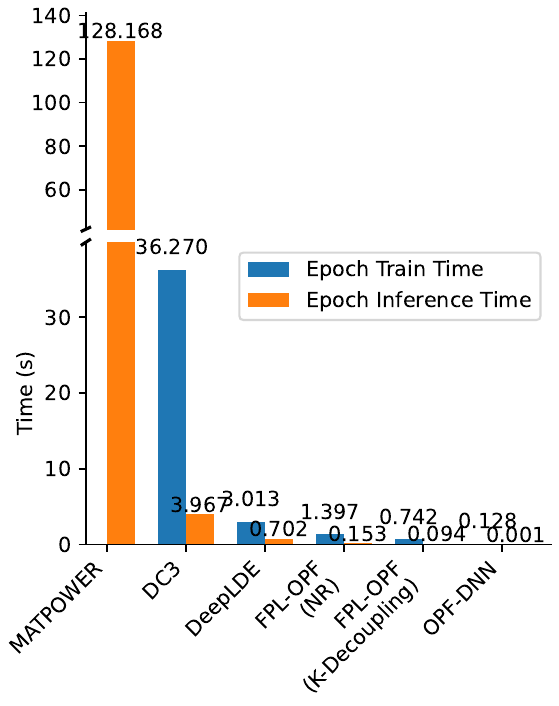}
			\caption{\footnotesize Time comparison at PEGASE 89-bus system.}
			
			\label{fig:time_case89}
		\end{subfigure}
		\begin{subfigure}[b]{0.22 \textwidth}
			\centering
			\includegraphics[width= \textwidth]{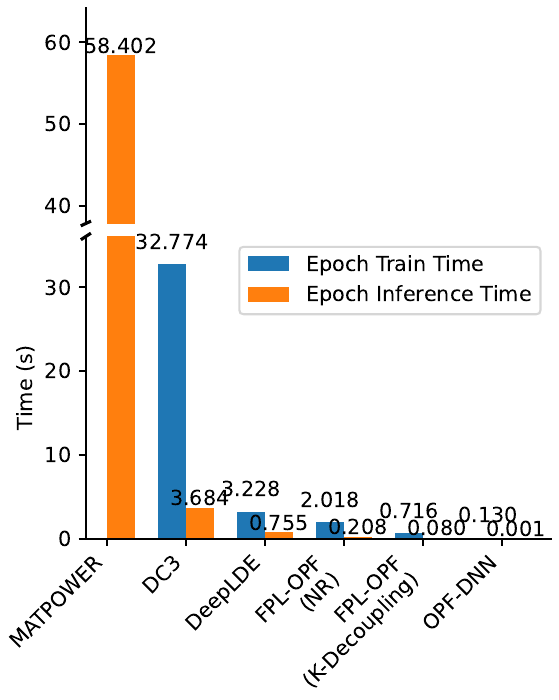}
			\caption{\footnotesize Time comparison at IEEE 118-bus system.}
			\label{fig:time_case118}
		\end{subfigure}
				\begin{subfigure}[b]{0.22 \textwidth}
			\centering
			\includegraphics[width= \textwidth]{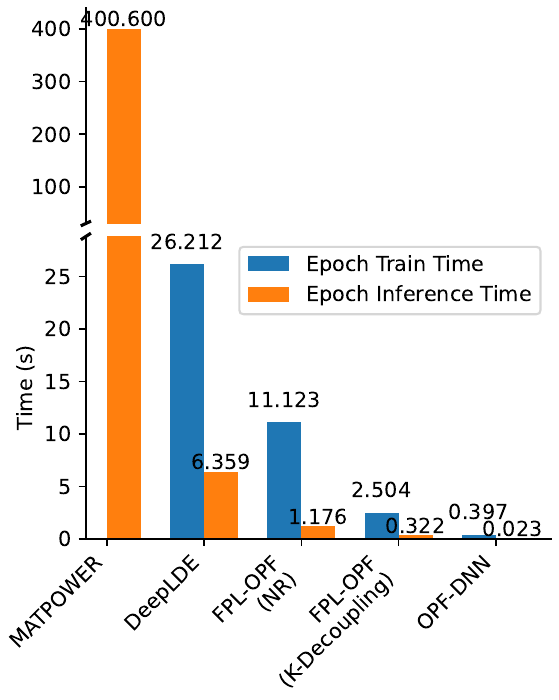}
			\caption{\footnotesize Time comparison at the NESTA 189-bus system.}
			\label{fig:time_case189}
		\end{subfigure}
		\caption{Time-efficiency comparison on (a) IEEE 57-bus, (b) PEGASE 89-bus, (c) IEEE 118-bus, and (d) NESTA 189-bus systems. Blue bars report training time per epoch, and orange bars report inference time on the test set.}    
		\label{fig:time_comparision}
	\end{figure*}
	To evaluate the proposed FPL-OPF framework, we consider four benchmark systems: the IEEE 57-bus, PEGASE 89-bus, and IEEE 118-bus from MATPOWER~\cite{zimmerman2010matpower} and NESTA 189-bus from NESTA~\cite{coffrin2014nesta}. These benchmarks cover increasingly challenging AC-OPF instances and enable us to assess both solution quality and computational scalability.
	
	\subsection{Experimental Setup}
	\label{subsec:exp_setup}
	
	\begin{table}[htbp]
		\centering
		\caption{{Model architectures and training hyperparameter settings}}
		\label{tab:hyperparameter_settings}
		\resizebox{0.95\linewidth}{!}{
		\begin{tabular}{@{}lcccc@{}}
			\toprule
			Power Grid                          & IEEE 57-bus & PEGASE 89-bus & IEEE 118-bus & NESTA 189-bus \\ \midrule
			$|\mathcal{G}|$                     & 7           & 12            & 54           & 35            \\
			$|\mathcal{E}|$                     & 80          & 210           & 186          & 206           \\
			Eq. Num.                            & 114         & 178           & 236          & 378           \\
			Ineq. Num.                          & 302         & 646           & 824          & 930           \\
			Hidden Layers                       & 200-200     & 300-300       & 200-200      & 4096-4096     \\
			Activation Function                 & ELU         & ELU           & ELU          & ELU           \\
			Learning Rate $\eta_{\bm{\phi}}$    & 1e-3        & 1e-3          & 1e-3         & 1e-4          \\
			Learning Rate $\eta_{\bm{\lambda}}$ & 0.1         & 0.01          & 0.01         & 0.01          \\
			Learning Rate $\eta_{\bm{\nu}}$     & 0.5         & 0.05          & 0.05         & 0.05          \\
			Optimizer                           & Adam        & Adam          & Adam         & Adam          \\
			Outer Iterations $\mathcal{T}$      & 20          & 20            & 20           & 40            \\
			Inner Iterations $\mathcal{I}$      & 25          & 25            & 25           & 25            \\
			Batch Size                          & 200         & 200           & 200          & 200           \\
			Epochs                              & 500         & 500           & 500          & 1000           \\ \bottomrule
		\end{tabular}
		}
	\end{table}
	
	\begin{table}[]
		\centering
		\caption{{Detailed configurations of the power flow solvers used in the forward pass. ‘‘N/A’’ indicates that this configuration item does not apply to the method.}}
		\label{tab:detailed_pf_solver_setting}
		\resizebox{0.95\linewidth}{!}{
\begin{tabular}{@{}llccccc@{}}
	\toprule
	\multicolumn{1}{l}{Power Grid}                                                                       & \multicolumn{1}{l}{Method} & Solver & \makecell{Total \\ Iterations} & \makecell{Guide \\ $K_G$} & \makecell{Refinement \\ $K_R$} & \makecell{Stopping \\ Tolerence} \\ 
	\midrule
	\multirow{3}{*}{\begin{tabular}[c]{@{}l@{}}IEEE 57-bus\\ PEGASE 89-bus \\ IEEE 118-bus\end{tabular}} & DC3, DeepLDE               & NR     & 10                           & N/A         & N/A              & $<10^{-5}$                     \\
	& FPL-OPF (NR)               & Hybrid & 10                           & 9 FDPF      & 1 NR             & $<10^{-5}$                     \\
	& FPL-OPF (K-Decoupling)     & Hybrid & 12                           & 8 FDPF      & 4 FDPF           & $<10^{-5}$                     \\ \midrule
	\multirow{3}{*}{NESTA 189-bus}                                                                       & DeepLDE                    & NR     & 18                           & N/A         & N/A              & $<10^{-5}$                     \\
	& FPL-OPF (NR)               & Hybrid & 18                           & 17 FDPF     & 1 NR             & $<10^{-5}$                     \\
	& FPL-OPF (K-Decoupling)     & Hybrid & 18                           & 10 FDPF     & 8 FDPF           & $<10^{-5}$                     \\ \bottomrule
\end{tabular}
	}
	\end{table}

	\subsubsection{Datasets and Load Profiles}
	We generate feasible load-profile datasets by perturbing the base operating points of each benchmark.
%	For the IEEE 57-bus, PEGASE 89-bus, and IEEE 118-bus systems, we sample 5{,}000 instances per case, using an 80\%/20\% train-test split and independent active and reactive demand perturbations drawn uniformly from $[80\%,120\%]$ of the nominal load.
%	For the NESTA 189-bus system, we generate 10{,}000 feasible samples with both active and reactive demands perturbed within $[90\%,110\%]$ of the nominal load.
For the IEEE 57-bus, PEGASE 89-bus, IEEE 118-bus, and NESTA 189-bus systems, we use an 80\% training and 20\% testing split, sampling 5{,}000 feasible samples for the first three systems and 10{,}000 feasible samples for the NESTA 189-bus system, with independent active and reactive demand perturbations drawn uniformly from $[80\%,120\%]$ of the nominal load for the first three systems and from $[90\%,110\%]$ for the NESTA 189-bus system.
	All experiments are implemented in PyTorch and run on an NVIDIA RTX 4090D GPU with an Intel i9-12600K CPU.
	\subsubsection{Baselines}
	To evaluate the effectiveness of the proposed approach, we compare different variants of FPL-OPF against several established benchmarks:
	\par
		 \textbf{MIPS Solver}~\cite{mips2007}: A primal dual interior point method in MATPOWER for solving non-linear programming problems. It serves as the ground-truth benchmark for feasibility and optimality.
\par
		 \textbf{OPF-DNN}~\cite{fiorettoPredictingACOptimal2020,fioretto2020lagrangian}: Unsupervised version of \cite{fiorettoPredictingACOptimal2020,fioretto2020lagrangian}  that directly optimizes the Lagrangian function using a penalty-based approach, without incorporating any explicit physics-informed layers.
\par
		 \textbf{DC3}~\cite{donti2021dc3}: A prominent physics-informed unsupervised method that embeds power flow physics via implicit differentiation.
		%		\item \textbf{FPL-OPF NR}: This variant employs the classical NR method for the forward fixed-point iterations. It serves as a high-precision baseline with quadratic convergence and a higher computational cost due to the factorization of the full Jacobian matrix.
\par
		\textbf{DeepLDE}~\cite{kim2025DeepLDE}: This method follows the DC3's exact implicit differentiation but without the DC3's correction mechanism. DeepLDE handles inequality constraints through a primal-dual learning scheme.
\par
		 \textbf{FPL-OPF (NR)}: This variant utilizes the FDPF solver for the rapid guide phase to bring the solution near convergence, followed by a single differentiable NR step for the refinement phase. 
\par
		\textbf{FPL-OPF (K-Decoupling)}: This variant utilizes FDPF for the guide phase and employs $K_{R}$ FDPF iterations for the differentiable refinement phase. By maintaining strict $\mathcal{O}(n^2)$ complexity for both forward and backward passes.
\par
	We train all methods for 500 epochs on the IEEE 57-bus, PEGASE 89-bus, and IEEE 118-bus systems, and for 1000 epochs on the NESTA 189-bus system. Model architectures and training hyperparameters are detailed in Table~\ref{tab:hyperparameter_settings}.
	Note that the DC3 baseline is excluded from the evaluations on the large NESTA 189-bus system, since the inequality correction mechanism in DC3 requires excessive memory allocations that exceed the capacity of an NVIDIA RTX 4090D GPU under a batch size of 200. This memory bottleneck becomes even more severe on larger grids. As detailed in Appendix~\ref{sec:appendix_case500}, all Newton-Raphson and exact implicit differentiation baselines encounter Out-Of-Memory failures on the PGLib 500-bus system, leaving FPL-OPF (K-Decoupling) as the only available learning-based approach.
	\subsubsection{Power Flow Solver Settings}
	To ensure a fair comparison, we align the stopping criterion and the total power flow iteration budget within each test grid. 
	The detailed settings are summarized in Table~\ref{tab:detailed_pf_solver_setting}.
	Specifically, the maximum iteration budget for exact implicit differentiation methods is rigorously justified by an ablation study detailed in Appendix~\ref{sec:appendix_ablation_deeplde_cap}.
	
	\subsubsection{Evaluation Metrics}
	We evaluate model performance on the training and test sets using a suite of metrics covering feasibility, optimality, and computational efficiency:
	\par
	\textbf{Constraint Satisfaction}: Evaluated via the mean and maximum mismatch for both equality and inequality constraints, along with the average number of violations exceeding a tolerance of $10^{-4}$.
 	\par
	\textbf{Optimality}: Measured by the average objective cost and the relative optimality gap on the test set, calculated as $100\% \times \frac{1}{|\mathcal{S}|} (\hat{\text{O}} - \text{O}^{\star}) / \text{O}^{\star}$, where $\hat{\text{O}}$ and $\text{O}^{\star}$ denote the optimal cost given by the NN and MIPS, respectively.
	\par
	\textbf{Computational Efficiency}: Measured by the total training time on the training set and inference time on the test set per epoch.
		
		\subsection{Results and Analysis}
		\label{subsec:results_analysis}
		
		\subsubsection{Training Convergence}
		Figure~\ref{fig:opt_gap} reports the test optimality gap curves on all four systems.
		The more fine-grained analysis of the training dynamics for these benchmarks are provided in Appendix~\ref{sec:appendix_exp_results}.
		On the IEEE 57-bus, PEGASE 89-bus, and IEEE 118-bus systems, the physics-aware methods, including DC3, DeepLDE, and both FPL-OPF variants, converge stably and typically approach their final gaps within roughly the first 200 epochs.
		The NESTA 189-bus case is substantially more challenging.
		Even so, FPL-OPF (K-Decoupling) continues to decrease the gap throughout training and reaches 1.48\% after 1000 epochs, outperforming FPL-OPF (NR) at 4.30\% and DeepLDE at 16.43\%.
		The numerical instability may be the reason behind the degraded performance of exact implicit differentiation methods like DeepLDE on this complex system.
		Specifically, the required Jacobian-inverse computation can become highly unstable when encountering ill-conditioned Jacobian matrices, which severely impairs the training dynamics \cite{bai2021stabilizing, gengTrainingImplicitModels2022}.
		In contrast, OPF-DNN remains highly unstable and exhibits large negative gaps caused by severe infeasibility, confirming the importance of embedding physical principles into the learning pipeline.
					\begin{figure}[t]    
			\centering
			\begin{subfigure}[t]{0.22 \textwidth}
				% 请先 usepackage{subcaption}
				% subfigure 环境用于在 figure* 环境中插入子图。 
				% [b] 指定子图的基线对齐方式为底部对齐
				% {0.48\textwidth} 设置子图的宽度为页面宽度的 48%
				
				\centering
				\includegraphics[width=  \textwidth]{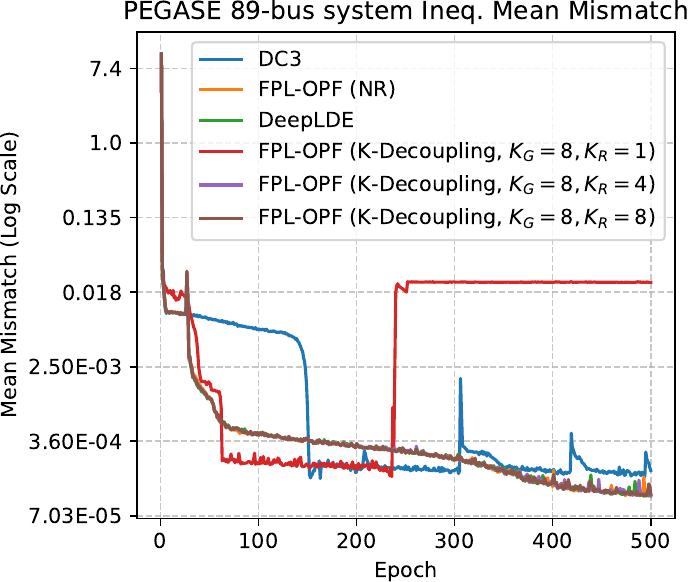}
				\caption{\footnotesize Inequality Mean Mismatch during training. }
				\label{fig:fpl_ablation_ineq_mean_mismatch}
			\end{subfigure}
			\begin{subfigure}[t]{0.22 \textwidth}
				\centering
				\includegraphics[width=  \textwidth]{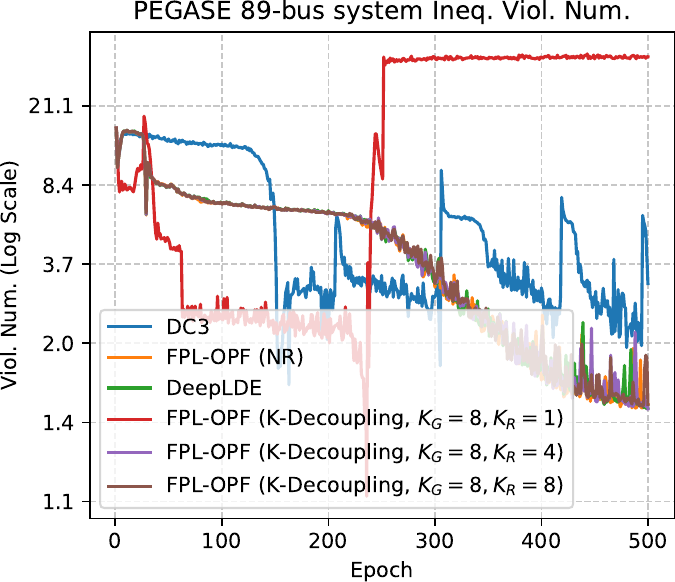}
				\caption{\footnotesize Inequality Violation Number during training.  }
				\label{fig:fpl_ablation_ineq_viol_num}
			\end{subfigure}
			\begin{subfigure}[t]{0.22 \textwidth}
				\centering
				\includegraphics[width= \textwidth]{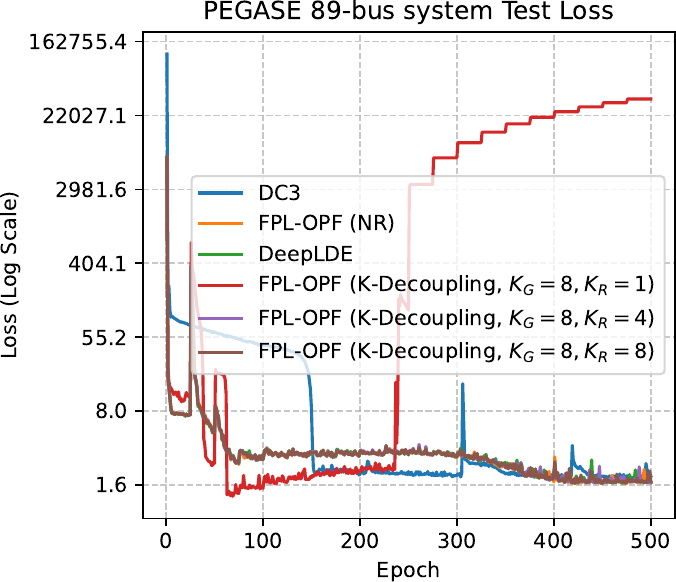}
				\caption{\footnotesize Test Loss convergence.  }
				\label{fig:fpl_ablation_loss}
			\end{subfigure}
			\begin{subfigure}[t]{0.22 \textwidth}
				\centering
				\includegraphics[width=  \textwidth]{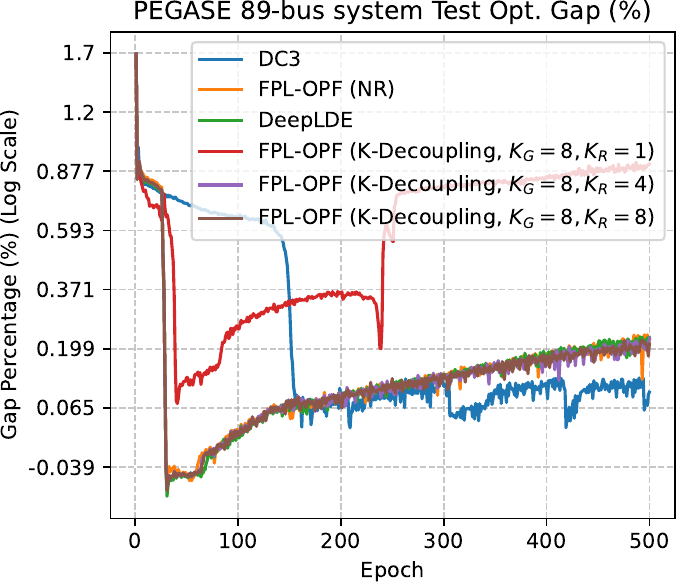}
				\caption{\footnotesize Test Optimality Gap (\%) convergence. }
				\label{fig:fpl_ablation_opt_gap}
			\end{subfigure}
			
			\caption{{Ablation study on the PEGASE 89-bus system about the number of refinement iterations $K_R$ in the FPL-OPF (K-Decoupling) method, compared with DC3, DeepLDE, and FPL-OPF (NR). The guide iterations are fixed at $K_G=8$.}}    
			\label{fig:fpl_ablation}
		\end{figure}
		
		\subsubsection{Constraint Satisfaction and Optimality}
		Table~\ref{tab:performance_comparison} summarizes feasibility and optimality on the four benchmarks, with the MIPS solver providing the reference optimum.
		The results first highlight the limitation of methods without an embedded physics layer. Although OPF-DNN attains artificially low objective values, its solutions are severely infeasible. 
		For example, its equality maximal mismatches are 32.56 on the PEGASE 89-bus system and 0.38 on the IEEE 118-bus system, which leads to misleading objective gaps of $-44.29\%$ and $-2.24\%$, respectively.
		\par 
		In contrast, DeepLDE and both FPL-OPF variants maintain very small equality mismatches across all systems, typically at or below the $10^{-8}$ level, and substantially reduce inequality violations relative to OPF-DNN and DC3.
		On the harder PEGASE 89-bus and NESTA 189-bus cases, FPL-OPF (K-Decoupling) achieves the best feasibility among the learned methods, with inequality mean mismatches of $1.11\times 10^{-4}$ and $5.21\times 10^{-7}$ and average inequality violation counts of 2.23 and 0.0075, respectively.
		\par 
		Under the fairer solver settings, physics-aware baselines become noticeably stronger on the first three benchmarks, and FPL-OPF remains competitive in optimality while retaining better constraint satisfaction on the large power grid.
		On the IEEE 57-bus system, FPL-OPF (NR) achieves the smallest learned optimality gap at 0.02\%.
		On the IEEE 118-bus system, DeepLDE and both FPL-OPF variants all attain a 0.06\% gap.
		On the large NESTA 189-bus system, FPL-OPF (K-Decoupling) clearly delivers the best trade-off, reducing the optimality gap to 1.48\%, compared with 4.30\% for FPL-OPF (NR) and 16.43\% for DeepLDE.
		These results show that FPL-OPF remains robust under the larger power grid, with its main advantage becoming especially pronounced on the more challenging system.
		% 针对时间部分还需要加上DeepLDE
		\subsubsection{Computational Efficiency}
		Figure~\ref{fig:time_comparision} shows that FPL-OPF preserves a clear computational advantage under the matched solver budgets.
		For training, FPL-OPF (K-Decoupling) requires only 0.716 seconds per epoch on the IEEE 118-bus system, compared with 32.774 seconds for DC3 and 3.228 seconds for DeepLDE, corresponding to speedups of about $45.8\times$ and $4.5\times$, respectively.
		The computational gain becomes even more remarkable on the large NESTA 189-bus system, where FPL-OPF (K-Decoupling) reduces the training time to 2.504 seconds per epoch, achieving a speedup of over $10\times$ compared to DeepLDE’s 26.212 seconds, versus 11.123 seconds for FPL-OPF (NR).
		This efficiency stems directly from our design that replaces the expensive full Jacobian factorization with the computationally efficient FDPF iterations, and we strictly maintain $O(n^2)$ complexity.
		For inference, once trained, FPL-OPF solves the full test set within a fraction of a second on the four systems. 
		In particular, FPL-OPF (K-Decoupling) requires 0.080 seconds on the IEEE 118-bus system and 0.322 seconds on the NESTA 189-bus system, whereas MIPS takes 58.402 seconds and 400.600 seconds, respectively, yielding speedups of about $730\times$ and $1244\times$. This substantial acceleration, together with the competitive feasibility and optimality reported above, highlights the practical value of FPL-OPF for time-sensitive AC-OPF applications. 
		Crucially, this strict $O(n^2)$ complexity allows the framework to scale to even larger grids without catastrophic memory overheads, as evidenced by our 500-bus system evaluations detailed in Appendix~\ref{sec:appendix_case500}.

		\subsection{Impact of Differentiable Refinement Iterations in FPL-OPF}
		\label{subsec:impact_kr}
		As discussed in \Cref{subsec:forward_pass,subsec:hybrid_solver}, the validity of the proposed implicit gradient approximation relies on the contraction property of the fixed-point operator. While a single FDPF iteration may not possess a sufficiently small contraction constant $\rho$ to guarantee strict gradient directional alignment in \Cref{thm:gradient_direction_align}, composing $K_R$ iterations yields a stronger block operator $T^{K_R}$ with $K_R$-dependent constants $\rho^{K_{R}}$, $L_{T,K_{R}}$, $L_{J,K_{R}}$, and $C_{1,K_{R}}$. We therefore conduct the training ablation and an empirical estimation of the constants of \Cref{thm:gradient_direction_align} on the PEGASE 89-bus system.
		\par
		We first perform the training ablation in Figure~\ref{fig:fpl_ablation}, where the number of non-differentiable guide iterations is fixed at $K_G=8$ and the differentiable refinement depth is varied across $K_R \in \{1,4,8\}$. Figures~\ref{fig:fpl_ablation_ineq_mean_mismatch} and \ref{fig:fpl_ablation_ineq_viol_num} show that the configuration with $K_R=1$ exhibits substantially larger inequality violations and more unstable training than the settings with larger $K_R$. The test loss and optimality gap in Figures~\ref{fig:fpl_ablation_loss} and \ref{fig:fpl_ablation_opt_gap} further show that $K_R=1$ fails to converge to a high-quality solution. In contrast, increasing $K_R$ to $4$ and $8$ drastically improves feasibility and training stability, indicating that differentiating through only one FDPF step does not provide a sufficiently accurate descent direction on this system.
					\begin{figure}[t]
			\centering
			\begin{subfigure}[t]{0.16\textwidth}
				\centering
				\includegraphics[width=\textwidth]{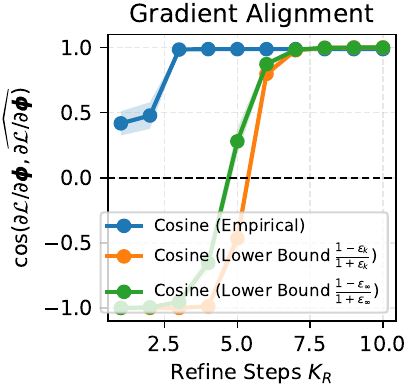}
				\caption{\footnotesize Gradient cosine similarity and theorem lower bounds.}
				\label{fig:theorem41_align}
			\end{subfigure}
			\begin{subfigure}[t]{0.15\textwidth}
				\centering
				\includegraphics[width=\textwidth]{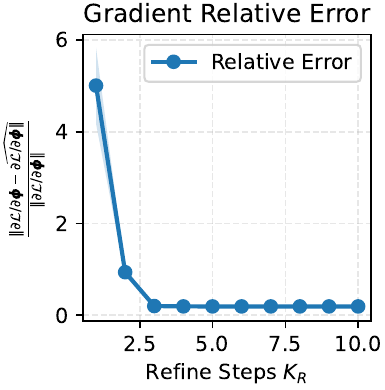}
				\caption{\footnotesize Relative gradient error.}
				\label{fig:theorem41_relerr}
			\end{subfigure}
			\begin{subfigure}[t]{0.15\textwidth}
				\centering
				\includegraphics[width=\textwidth]{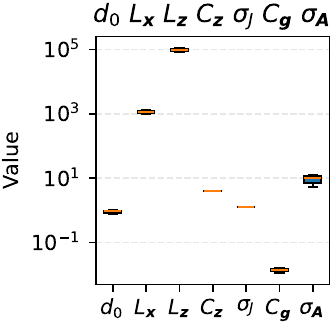}
				\caption{\footnotesize Fixed constants $d_0$, $L_{\bm{x}}$, $L_{\bm{z}}$, $C_{\bm{z}}$, $\sigma_{J}$, $C_{\bm{g}}$, and $\sigma_{\bm{A}}$.}
				\label{fig:theorem41_fixed}
			\end{subfigure}
			\begin{subfigure}[t]{0.16\textwidth}
				\centering
				\includegraphics[width=\textwidth]{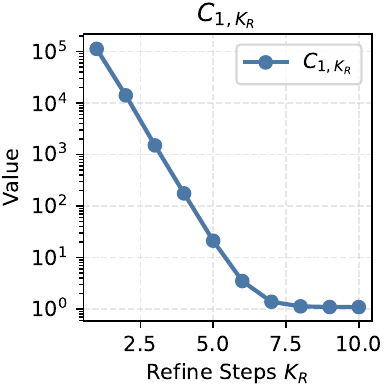}
				\caption{\footnotesize  Constant $C_{1,K_{R}}$.}
				\label{fig:theorem41_c1}
			\end{subfigure}
			\begin{subfigure}[t]{0.15\textwidth}
				\centering
				\includegraphics[width=\textwidth]{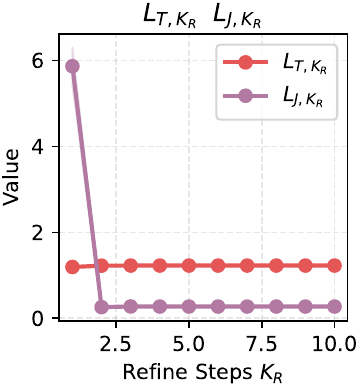}
				\caption{\footnotesize Composite Lipschitz constants $L_{T,K_{R}}$ and $L_{J,K_{R}}$.}
				\label{fig:theorem41_l}
			\end{subfigure}
			\begin{subfigure}[t]{0.15\textwidth}
				\centering
				\includegraphics[width=\textwidth]{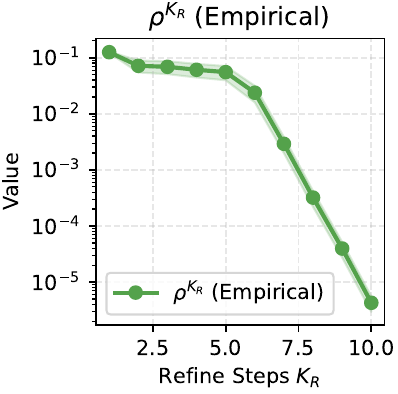}
				\caption{\footnotesize Contraction factors $\rho^{K_{R}}$ of the composite operator.}
				\label{fig:theorem41_rho}
			\end{subfigure}
			\caption{{Empirical estimation in \Cref{thm:gradient_direction_align} on the PEGASE 89-bus system. The solid curves represent the mean value, and the shaded region spans the standard deviation obtained over 50 samples.}}
			\label{fig:theorem41_empirical}
		\end{figure}
		To connect this behavior directly to \Cref{thm:gradient_direction_align}, we estimate the theorem constants on 50 samples drawn from the PEGASE 89-bus system by using the exact implicit gradient of DeepLDE as reference and treating the last $K_R$ FDPF refinement steps as the composite operator $T^{K_R}$.
		The key estimates are collected in Tables~\ref{tab:theorem41_constants} and \ref{tab:theorem41_composite_constants}, while Figure~\ref{fig:theorem41_empirical} visualizes the comprehensive and detailed constant estimation results from $K_R=1$ to $K_R=10$.
		Figure~\ref{fig:theorem41_fixed} shows that the fixed constants remain finite and concentrated in the neighborhood reached by the hybrid solver.
		This does not claim a global guarantee over arbitrary states, but it provides direct numerical evidence that the local boundedness and Lipschitz assumptions required by \Cref{thm:gradient_direction_align}  are satisfied in the actual power grid system on the operating region during training.
		\par

	\begin{table}[]
	\caption{{The fixed estimated constants  $d_0$, $L_{\bm{x}}$, $L_{\bm{z}}$, $C_{\bm{z}}$, $\sigma_{J}$, $C_{\bm{g}}$, and $\sigma_{\bm{A}}$ of \Cref{thm:gradient_direction_align} on the PEGASE 89-bus system. All entries report mean $\pm$ standard deviation.}}
	\label{tab:theorem41_constants}
	\resizebox{0.8\linewidth}{!}{
	\begin{tabular}{@{}cl|cl@{}}
		\toprule
		\multicolumn{1}{c}{Constant} & Mean $\pm$ Std.                & \multicolumn{1}{c}{Constant} & Mean $\pm$ Std.                   \\ \midrule
		$C_{\bm{z}}$                             & $3.982 \pm 0.027$              & $d_0$                             & $0.918 \pm 0.076$                 \\
		$\sigma_{J}$                             & $1.234 \pm 0.005$              & $C_{\bm{g}}$                      & $(1.351 \pm 0.145)\times 10^{-2}$ \\
		$L_{\bm{x}}$                             & $(1.150 \pm 0.085)\times 10^3$ & $\sigma_{\bm{A}}$                 & $9.61 \pm 2.25$                   \\
		$L_{\bm{z}}$                             & $(9.818 \pm 0.730)\times 10^4$ &                                   &                                   \\ \bottomrule
	\end{tabular}}
\end{table}

% Please add the following required packages to your document preamble:
% \usepackage{booktabs}
\begin{table}[]
	\caption{{The estimated composite-operator constants  $C_{1,K_{R}}$, $\rho^{K_{R}}$, $L_{T,K_{R}}$, and $L_{J,K_{R}}$ of \Cref{thm:gradient_direction_align} on the PEGASE 89-bus system. All entries report mean $\pm$ standard deviation.}}
	\label{tab:theorem41_composite_constants}
	\resizebox{0.95\linewidth}{!}{
	\begin{tabular}{@{}lllll@{}}
		\toprule
		$K_R$ & $\rho^{K_{R}}$ (Empirical)              & $L_{T,K_{R}}$         & $L_{J,K_{R}}$         & $C_{1,K_{R}}$                      \\ \midrule
		$1$   & $0.125 \pm 0.004$                 & $1.195 \pm 0.001$ & $5.870 \pm 0.442$ & $(1.122 \pm 0.084)\times 10^5$ \\
		$4$   & $0.0609 \pm 0.0166$               & $1.228 \pm 0.005$ & $0.274 \pm 0.008$ & $(1.755 \pm 0.109)\times 10^2$ \\
		$8$   & $(3.211 \pm 1.023)\times 10^{-4}$ & $1.228 \pm 0.005$ & $0.274 \pm 0.008$ & $1.127 \pm 0.029$              \\ \bottomrule
	\end{tabular} 
}
\end{table}
		\par
		Figures~\ref{fig:theorem41_c1}--\ref{fig:theorem41_rho} together with Tables~\ref{tab:theorem41_constants}--\ref{tab:theorem41_composite_constants} further show how composing FDPF refinement steps improves the bound. From $K_R=1$ to $K_R=4$ and $8$,  the  empirical contraction factor shrink rapidly, and the constant $C_{1,K_{R}}$ collapses by several orders of magnitude.
		In comparison, $L_{T,K_{R}}$ changes little once $K_R \ge 2$, while $L_{J,K_{R}}$ drops sharply then stabilizes. Hence, the dominant improvement in the theorem lower bound comes from the geometric decay of $\rho^{K_{R}}$ and the associated reduction of the transient term.
		\par
		This improvement is reflected directly in the gradients. In Figure~\ref{fig:theorem41_align}, the empirical cosine similarity between the exact and approximate parameter gradients increases from $0.42$ at $K_R=1$ and $0.48$ at $K_R=2$ to above $0.98$ once $K_R \ge 3$, while Figure~\ref{fig:theorem41_relerr} shows that the relative gradient error drops from about $5.0$ to about $0.19$. The theorem-based lower bounds are conservative but informative.
		The asymptotic lower bound $\frac{1-\varepsilon_{\infty}}{1+\varepsilon_{\infty}}$ becomes positive at $K_R=5$, and the lower bound $\frac{1-\varepsilon_{k}}{1+\varepsilon_{k}}$ becomes positive at $K_R \ge 6$. This explains why the practical transition in Figure~\ref{fig:fpl_ablation} already occurs between $K_R=1$ and $K_R \in \{4,8\}$, while a larger $K_R$ provides an additional theory-driven safety margin. In practice, we therefore choose the smallest refinement depth at which the empirical alignment has essentially saturated and the forward cost remains low.
		For the PEGASE 89-bus case, this transition occurs around $K_R=4$.
		
		\section{Conclusion}
		\label{sec:conclusion}
		We introduced FPL-OPF, a physics-aware unsupervised framework that solves AC-OPF by embedding an FDPF solver as an implicit layer. Our method ensures strict adherence to the physical laws of power systems while circumventing the significant computational overhead of traditional approaches. We theoretically proved that the gradient derived from the refinement stage aligns well with the true implicit gradient, enabling efficient automatic differentiation without Jacobian inversions. 
		Experiments demonstrate that FPL-OPF offers a substantial acceleration compared to existing methods without compromising physical feasibility or optimality.
		{Future work will focus on developing more sophisticated theoretical estimators to strictly determine $K_R$, extending the framework to the dynamic grid topology, and designing distributed learning architectures for large-scale, multi-area power systems.}
%		\begin{acks}
		\section*{Acknowledgments}
		This work was supported by National Natural Science Foundation of China (62303319), ShanghaiTech AI4S Initiative SHTAI4S202404, HPC Platform of ShanghaiTech University, and MoE Key Laboratory of Intelligent Perception and Human-Machine Collaboration (ShanghaiTech University).
%		\end{acks}
		\clearpage
		%% The next two lines define the bibliography style to be used, and
		%% the bibliography file.
		\bibliographystyle{ACM-Reference-Format}
		\bibliography{intro}
		
		\clearpage
		
		%%
		%% If your work has an appendix, this is the place to put it.
		\appendix
		\onecolumn
		\setcounter{theorem}{0}
		\setcounter{equation}{0}%将公式编号归0
		\renewcommand{\theequation}{A.\arabic{equation}}
		
		\setcounter{table}{0}
		\renewcommand{\thetable}{\textsc{A}.\arabic{table}}
		
		\setcounter{figure}{0}
		\renewcommand{\thefigure}{A.\arabic{figure}}
		\section*{Appendix}

		\section{Proof for \Cref{thm:gradient_direction_align}}
		\begin{theorem}[Gradient  Directional Alignment]
			\label{app:proof_gradient_direction_align}
			Assume Lemmas \ref{lemma:fdpf_contraction} and \ref{lemma:os_id_error} and Assumptions \ref{ass:lipschitz_loss_grad}--\ref{ass:op_norm_bound} hold.
			Let $d_0 \triangleq \|\bm{z}_0 - \bm{z}^{\star}\|$, $C_1 \triangleq \rho (L_{\bm{x}} + L_{\bm{z}} \sigma_{J}) + C_{\bm{z}} L_J$, and 
			$\varepsilon_{k}\triangleq\frac{\sigma_{\bm{A}}}{C_{\bm{g}}} \left(  \frac{\rho L_T C_{\bm{z}} }{1-\rho} 	+ \rho^k d_0 C_1 \right)$.
			Then, for every $k$ satisfying $\varepsilon_{k} < 1$, the cosine similarity between the true gradient $\frac{\partial \mathcal{L}}{\partial \bm{\phi}}$ and its approximation gradient $\widehat{\frac{\partial \mathcal{L}}{\partial \bm{\phi}} }$ satisfies:
			$$
			\left \langle 
			\frac{\partial \mathcal{L}}{\partial \bm{\phi}}, 
			\widehat{\frac{\partial \mathcal{L}}{\partial \bm{\phi}}} 
			\right \rangle
			\ge
			\frac{1-\varepsilon_{k}}{1+\varepsilon_{k}}
			> 0.
			$$
		\end{theorem}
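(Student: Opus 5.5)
The plan is to bound the relative error $\|g-\hat g\|/\|g\|$, where $g \triangleq \partial\mathcal{L}/\partial\bm{\phi}$ and $\hat g \triangleq \widehat{\partial \mathcal{L}/\partial \bm{\phi}}$, by $\varepsilon_k$. Then I would invoke an elementary vector inequality to turn this into a cosine bound. The statement writes $\langle\cdot,\cdot\rangle$, but it should be read as the cosine similarity, since the bound is scale-free.

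\textbf{The elementary inequality.} If $\|g-\hat g\|\le \varepsilon\|g\|$ with $\varepsilon<1$, then $\|\hat g\|\le(1+\varepsilon)\|g\|$ and
\[
\langle g,\hat g\rangle=\tfrac12\bigl(\|g\|^2+\|\hat g\|^2-\|g-\hat g\|^2\bigr)\ge \tfrac12\bigl(\|g\|^2+\|\hat g\|^2-\varepsilon^2\|g\|^2\bigr).
\]
Dividing by $\|g\|\|\hat g\|$ and minimizing over $t=\|\hat g\|/\|g\|\in[1-\varepsilon,1+\varepsilon]$ gives $\cos\ge\sqrt{1-\varepsilon^2}\ge(1-\varepsilon)/(1+\varepsilon)$. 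A cruder route also suffices: $\langle g,\hat g\rangle\ge\|g\|^2(1-\varepsilon)$, so $\cos\ge(1-\varepsilon)/(1+\varepsilon)$ directly.

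\textbf{Decomposing the error.} Write $g^\top=(a^\star+b^{\star\top}J^{\ID})\bm{A}$ and $\hat g^\top=(a^k+b^{k\top}J^{\OS})\bm{A}$. Here $a^\star,a^k$ denote $\partial_{\bm{x}}\mathcal{L}_{\mathrm{part}}$ evaluated at $\bm{z}^\star$ and at $\bm{z}^{k+1}$, and $b^\star,b^k$ denote $\partial_{\bm{z}}\mathcal{L}_{\mathrm{total}}$ at the same two points. Then
\[
\|g-\hat g\|\le\sigma_{\bm{A}}\Bigl(\|a^k-a^\star\|+\|b^k-b^\star\|\,\|J^{\ID}\|_{\mathrm{op}}+\|b^k\|\,\|J^{\OS}-J^{\ID}\|_{\mathrm{op}}\Bigr).
\]
I would bound the three terms as follows.
\begin{itemize}
\item By Assumption~\ref{ass:lipschitz_loss_grad} together with contraction, the first two terms are at most $(L_{\bm{x}}+L_{\bm{z}}\sigma_J)\|\bm{z}^{k+1}-\bm{z}^\star\|\le(L_{\bm{x}}+L_{\bm{z}}\sigma_J)\,\rho\,\|\bm{z}^k-\bm{z}^\star\|$. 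This uses $\bm{z}^{k+1}-\bm{z}^\star=T(\bm{z}^k)-T(\bm{z}^\star)$, the mean value theorem on the set $\mathcal{Z}$, and Lemma~\ref{lemma:fdpf_contraction}.
\item By Assumption~\ref{ass:boundedness} and Lemma~\ref{lemma:os_id_error}, the third term is at most $C_{\bm{z}}\bigl(\rho L_T/(1-\rho)+L_J\|\bm{z}^k-\bm{z}^\star\|\bigr)$.
\end{itemize}
Iterating the contraction gives $\|\bm{z}^k-\bm{z}^\star\|\le\rho^k d_0$. Collecting terms then yields exactly $\sigma_{\bm{A}}\bigl(\rho L_TC_{\bm{z}}/(1-\rho)+\rho^kd_0C_1\bigr)$ with $C_1$ as defined. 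Dividing by $\|g\|\ge C_{\bm{g}}$ (Assumption~\ref{ass:non_vanish_grad}) gives the relative error bound $\varepsilon_k$.

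\textbf{Main obstacle.} The delicate step is the convexity of $\mathcal{Z}$ needed for the mean value argument, together with staying inside $\mathcal{Z}$. Invariance comes from part (1) of Lemma~\ref{lemma:fdpf_contraction}. Convexity I would either assume or replace by a path-integral argument along a segment kept inside the neighborhood. I would also take care that the $L_J$ term is evaluated at $\bm{z}^k$ rather than $\bm{z}^{k+1}$, which matches the lemma and costs no extra factor of $\rho$.
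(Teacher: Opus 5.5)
Your proposal is correct and follows essentially the same route as the paper: the same three-term decomposition of the error $(\bm{a}^{k+1}-\bm{a}^{\star})+\bm{b}^{k+1}(\bm{J}^{k+1}-\bm{J}^{\star})+(\bm{b}^{k+1}-\bm{b}^{\star})\bm{J}^{\star}$, the same bounds from the Lipschitz and boundedness assumptions, the one-step Jacobian lemma and $\|\bm{z}^k-\bm{z}^{\star}\|\le\rho^k d_0$, and your ``cruder route'' for the cosine step is exactly the paper's final argument. Your sharper $\sqrt{1-\varepsilon_k^2}$ bound and your explicit note that the mean-value step needs convexity of $\mathcal{Z}$ (which the paper covers only with ``by smoothness and contraction'') are minor refinements, not a different approach.
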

		
		\begin{proof}
			For notation brevity, define
			$
			\bm{a}^{\star}
			\triangleq
			\partial_{\bm{x}} \mathcal{L}_{\mathrm{part}}(\bm{x},\bm{z}^{\star}),
			\;
			\bm{b}^{\star}
			\triangleq
			\partial_{\bm{z}} \mathcal{L}_{\mathrm{total}}(\bm{x},\bm{z}^{\star}),
			\;
			\bm{J}^{\star}
			\triangleq
			\bm{J}^{\mathrm{ID}}_{\bm{x}}\bm{z}^{\star},
			$
			and $\bm{a}^{k+1}, \; \bm{b}^{k+1}, \; \bm{J}^{k+1}$ denote their counterparts evaluated at the $(k+1)$-th iterate.
			\newline
			Let
			\begin{equation}
				\bm{v}^{\star} \triangleq \bm{a}^{\star} + \bm{b}^{\star}\bm{J}^{\star},
				\quad
				\bm{v}^{k+1} \triangleq \bm{a}^{k+1} + \bm{b}^{k+1}\bm{J}^{k+1},
				\quad
				\Delta \bm{v} \triangleq \bm{v}^{k+1} - \bm{v}^{\star}.
			\end{equation}
			Then,
			\begin{equation}
				\frac{\partial \mathcal{L}}{\partial \bm{\phi}} 
				=\bm{g}^{\star} = \bm{v}^{\star}\bm{A},
				\quad
				\widehat{\frac{\partial \mathcal{L}}{\partial \bm{\phi}}} =\widehat{\bm{g}}^{k+1} = \bm{v}^{k+1}\bm{A}
				= \bm{g}^{\star} + \Delta \bm{v}\,\bm{A}.
			\end{equation}
			We first bound $\Delta \bm{v}$. By the triangle inequality,
			\begin{equation}
				\|\Delta \bm{v}\|
				\le
				\|\bm{a}^{k+1} - \bm{a}^{\star}\|
				+
				\|\bm{b}^{k+1}\| \, \|\bm{J}^{k+1} - \bm{J}^{\star}\|_{\mathrm{op}}
				+
				\|\bm{b}^{k+1} - \bm{b}^{\star}\| \, \|\bm{J}^{\star}\|_{\mathrm{op}}.
			\end{equation}
			By smoothness and contraction,
			\begin{equation}
				\begin{aligned}
					\|\bm{a}^{k+1} - \bm{a}^{\star}\| 
					\leq L_{\bm{x}} \rho^{k+1} d_0, 
					\quad  
					\|\bm{b}^{k+1} - \bm{b}^{\star}\| \leq L_{\bm{z}} \rho^{k+1} d_0, 
					\quad
					\|\bm{J}^{k+1} - \bm{J}^{\star}\|_{\mathrm{op}} 
					\leq 
					\frac{\rho L_T}{1-\rho} + L_J \rho^{k} d_0. 
				\end{aligned}
			\end{equation}
			Combining with $\|\bm{b}^{k+1}\| \leq C_{\bm{z}}$ and $\|\bm{J}^{\star}\| \leq \sigma_{J}$,
			\begin{equation}
				\|\Delta \bm{v}\|
				\le
				\frac{\rho L_T C_{\bm{z}} }{1-\rho}
				+
				\rho^k d_0 \left[ \rho (L_{\bm{x}} + L_{\bm{z}} \sigma_{J}) + C_{\bm{z}} L_J \right]
				=
				\frac{\rho L_T C_{\bm{z}} }{1-\rho}
				+
				\rho^k d_0 C_1 .
			\end{equation}
			By Assumption~\ref{ass:op_norm_bound},
			\begin{equation}
							\|\Delta \bm{v}\,\bm{A}\|
				\le
				\|\Delta \bm{v}\| \|\bm{A}\|_{\mathrm{op}}
				\le
				\sigma_{\bm{A}} \|\Delta \bm{v}\| .
			\end{equation}
			Next, by Assumption~\ref{ass:non_vanish_grad},
			\begin{equation}
							\|\bm{g}^{\star}\|
				=
				\|\bm{v}^{\star}\bm{A}\|
				\ge C_{\bm{g}} .
			\end{equation}
			Hence,
			\begin{equation}
				\begin{aligned}
					\langle \bm{g}^{\star}, \widehat{\bm{g}}^{k+1} \rangle
					=
					\langle \bm{g}^{\star}, \bm{g}^{\star} + \Delta \bm{v}\,\bm{A} \rangle
					\ge 
					\|\bm{g}^{\star}\|^2
					-
					\|\bm{g}^{\star}\|  \|\Delta \bm{v}\,\bm{A}\| 
					\ge
					\|\bm{g}^{\star}\|
					\left(
					\|\bm{g}^{\star}\| - \sigma_{\bm{A}} \|\Delta \bm{v}\|
					\right).
				\end{aligned}
			\end{equation}
			Similarly,
			\begin{equation}
				\|\widehat{\bm{g}}^{k+1}\|
				\le
				\|\bm{g}^{\star}\| + \|\Delta \bm{v}\,\bm{A}\|
				\le
				\|\bm{g}^{\star}\| + \sigma_{\bm{A}} \|\Delta \bm{v} \| .
			\end{equation}
			Therefore,
			\begin{equation}
				\cos \bigl(\bm{g}^{\star}, \widehat{\bm{g}}^{k+1}\bigr)
				=
				\frac{\langle \bm{g}^{\star}, \widehat{\bm{g}}^{k+1} \rangle}
				{\|\bm{g}^{\star}\|  \|\widehat{\bm{g}}^{k+1}\|}
				\ge
				\frac{\|\bm{g}^{\star}\| - \sigma_{\bm{A}} \|\Delta \bm{v}\|}
				{\|\bm{g}^{\star}\| + \sigma_{\bm{A}} \|\Delta \bm{v}\|}.
			\end{equation}
			Using $\|\bm{g}^{\star}\| \ge C_{\bm{g}}$ and the bound on $\|\Delta \bm{v}\|$, we obtain
			\begin{equation}
				\cos \left (\bm{g}^{\star}, \widehat{\bm{g}}^{k+1} \right)
				\ge
				\frac{
					1 - \frac{\sigma_{\bm{A}}}{C_{\bm{g}}}\left(  \frac{\rho L_T C_{\bm{z}} }{1-\rho} + \rho^k d_0 C_1 \right)
				}{
					1 + \frac{\sigma_{\bm{A}}}{C_{\bm{g}}}\left(  \frac{\rho L_T C_{\bm{z}} }{1-\rho} + \rho^k d_0 C_1 \right)
				}
				=
				\frac{1-\varepsilon_{k}}{1+\varepsilon_{k}}.
			\end{equation}
			Thus, whenever $\varepsilon_{k} < 1$, the cosine similarity is strictly positive.
		\end{proof}

		% figure ineq mean mismatch
		\begin{figure*}[ht]
			\centering
			\begin{subfigure}[t]{0.22 \textwidth}
				\centering
				\includegraphics[width= \textwidth]{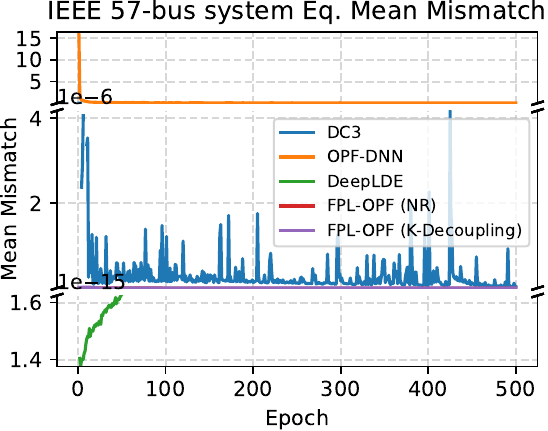}
				\caption{\footnotesize Equality mean mismatch comparison at the IEEE 57-bus system.}
				\label{fig:appedix_case57_eq_mean_mismatch}
				
			\end{subfigure}
			\begin{subfigure}[t]{0.22 \textwidth}
				\centering
				\includegraphics[width= \textwidth]{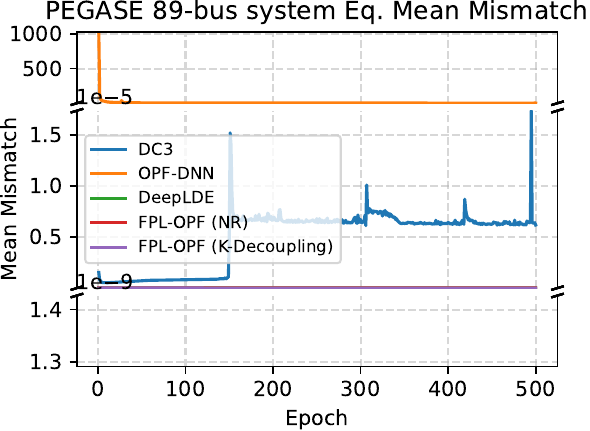}
				\caption{\footnotesize Equality mean mismatch comparison at PEGASE 89-bus system.}
				
				\label{fig:appedix_case89_eq_mean_mismatch}
			\end{subfigure}
			\begin{subfigure}[t]{0.22 \textwidth}
				\centering
				\includegraphics[width= \textwidth]{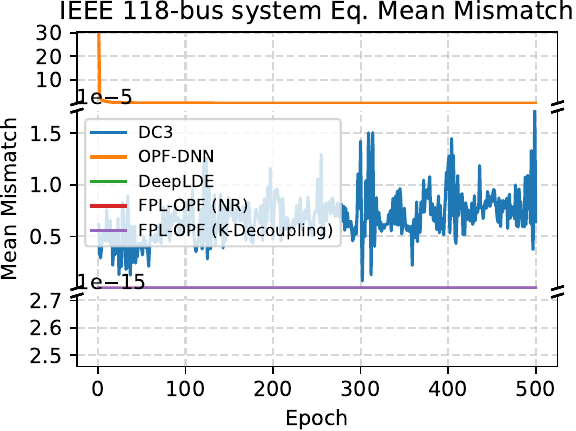}
				\caption{\footnotesize Equality mean mismatch comparison at the IEEE 118-bus system.}
				\label{fig:appedix_case118_eq_mean_mismatch}
			\end{subfigure}
						\begin{subfigure}[t]{0.22 \textwidth}
				\centering
				\includegraphics[width= \textwidth]{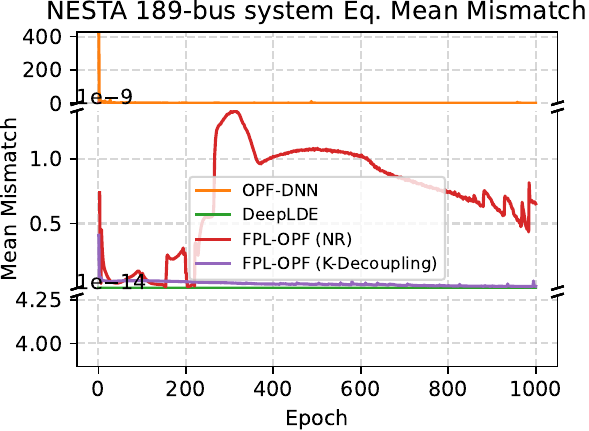}
				\caption{\footnotesize Equality mean mismatch comparison at the NESTA 189-bus system.}
				\label{fig:appedix_case189_eq_mean_mismatch}
			\end{subfigure}
			\caption{Equality mean mismatch comparison on (a) IEEE 57-bus, (b) PEGASE 89-bus, (c) IEEE 118-bus, and (d) NESTA 189-bus systems.}    
			\label{fig:appendix_eq_mean_mismatch}
		\end{figure*}
		
		% figure ineq mean mismatch
		\begin{figure*}[ht]
			\centering
			\begin{subfigure}[t]{0.23 \textwidth}
				\centering
				\includegraphics[width= \textwidth]{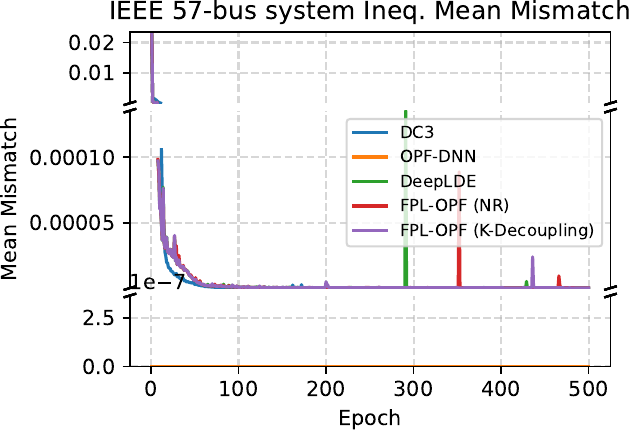}
				\caption{\footnotesize Inequality mean mismatch comparison at the IEEE 57-bus system.}
				\label{fig:appedix_case57_ineq_mean_mismatch}
			\end{subfigure}
			\begin{subfigure}[t]{0.23 \textwidth}
				\centering
				\includegraphics[width= \textwidth]{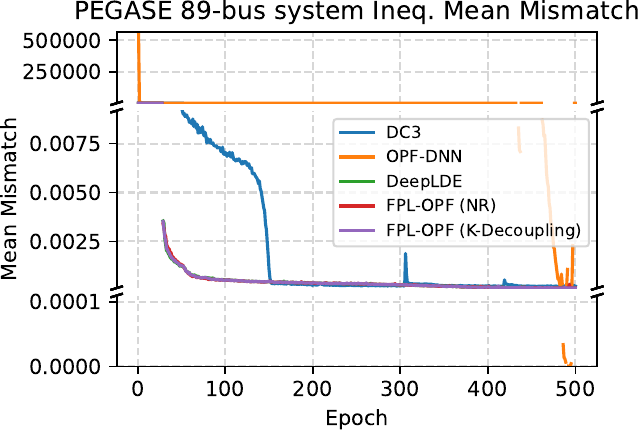}
				\caption{\footnotesize Inequality mean mismatch comparison at PEGASE 89-bus system.}
				
				\label{fig:appedix_case89_ineq_mean_mismatch}
			\end{subfigure}
			\begin{subfigure}[t]{0.21 \textwidth}
				\centering
				\includegraphics[width= \textwidth]{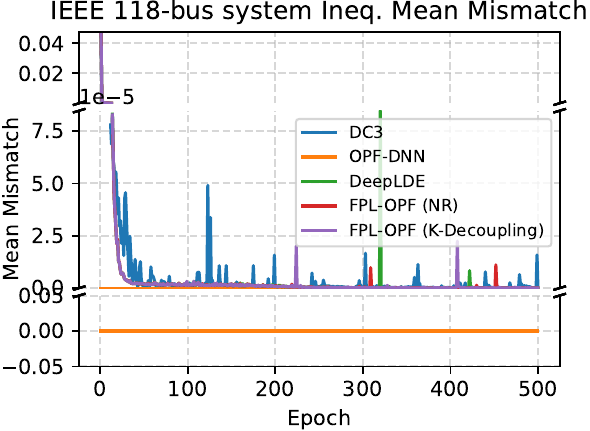}
				\caption{\footnotesize Inequality mean mismatch comparison at the IEEE 118-bus system.}
				\label{fig:appedix_case118_ineq_mean_mismatch}
			\end{subfigure}
			\begin{subfigure}[t]{0.21 \textwidth}
				\centering
				\includegraphics[width= \textwidth]{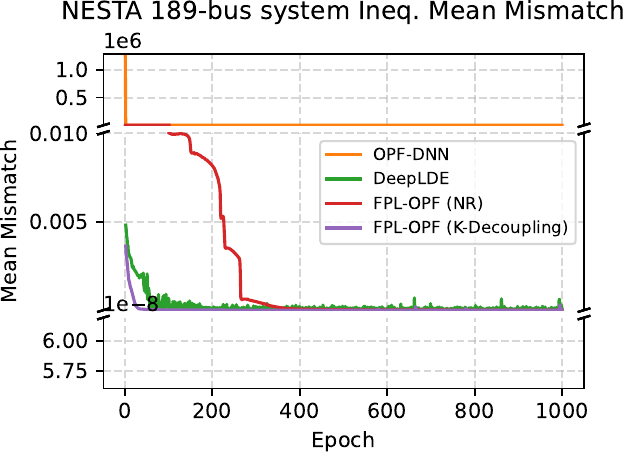}
				\caption{\footnotesize Inequality mean mismatch comparison at the NESTA 189-bus system.}
				\label{fig:appedix_case189_ineq_mean_mismatch}
			\end{subfigure}
			\caption{Inequality mean mismatch comparison on (a) IEEE 57-bus, (b) PEGASE 89-bus, (c) IEEE 118-bus, and (d) NESTA 189-bus systems.}    
			\label{fig:appendix_ineq_mean_mismatch}
		\end{figure*}
		
		% figure ineq viol num
		\begin{figure*}[ht]    
			\centering
			
			\begin{subfigure}[t]{0.22 \textwidth}
				\centering
				\includegraphics[width= \textwidth]{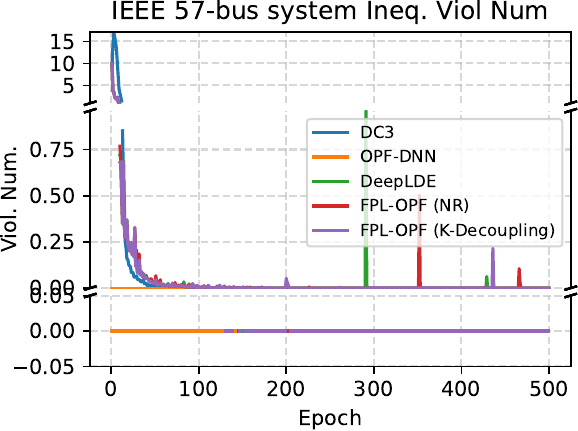}
				\caption{\footnotesize Inequality violation number comparison at IEEE 57-bus system.}
				\label{fig:appedix_case57_ineq_viol_num}
				
			\end{subfigure}
			\begin{subfigure}[t]{0.22 \textwidth}
				\centering
				\includegraphics[width= \textwidth]{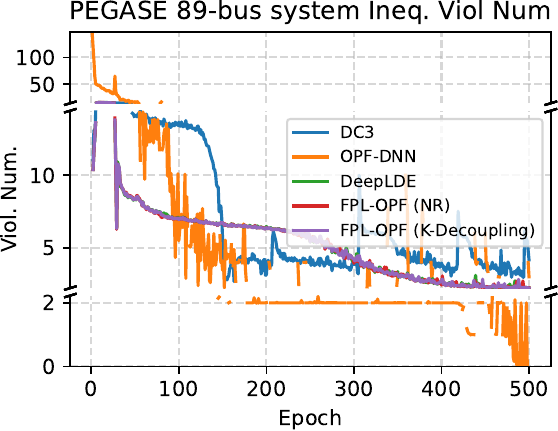}
				\caption{\footnotesize Inequality violation number comparison at PEGASE 89-bus system.}
				
				\label{fig:appedix_case89_ineq_viol_num}
			\end{subfigure}
			\begin{subfigure}[t]{0.22 \textwidth}
				\centering
				\includegraphics[width= \textwidth]{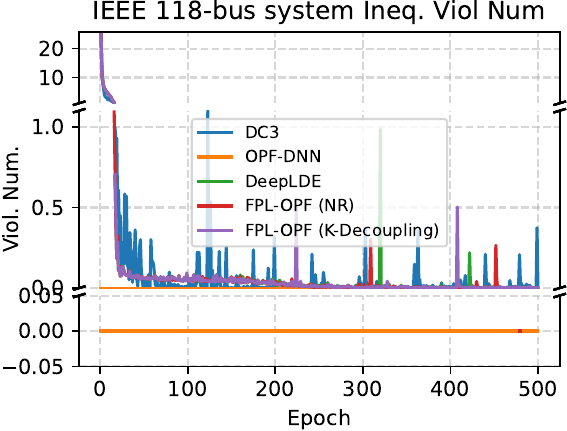}
				\caption{\footnotesize Inequality violation number comparison at IEEE 118-bus system.}
				\label{fig:appedix_case118_ineq_viol_num}
			\end{subfigure}
						\begin{subfigure}[t]{0.22 \textwidth}
				\centering
				\includegraphics[width= \textwidth]{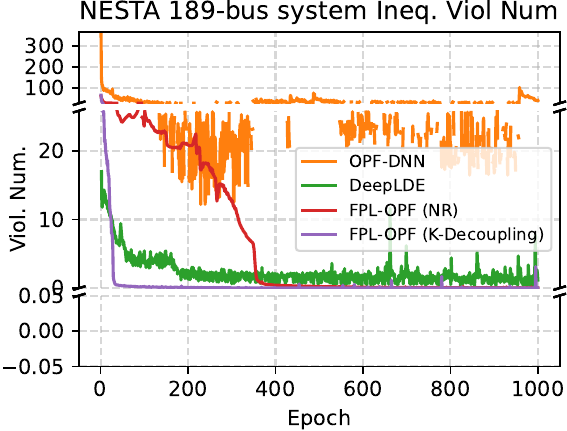}
				\caption{\footnotesize Inequality violation number comparison at NESTA 189-bus system.}
				\label{fig:appedix_case189_ineq_viol_num}
			\end{subfigure}
			\caption{Inequality violation number comparison on (a) IEEE 57-bus, (b) PEGASE 89-bus, (c) IEEE 118-bus, and (d) NESTA 189-bus systems.}    
			\label{fig:appendix_ineq_viol_num}
		\end{figure*}
		
		% figure test loss
		\begin{figure*}[ht]    
			\centering
			
			\begin{subfigure}[t]{0.22 \textwidth}
				\centering
				\includegraphics[width= \textwidth]{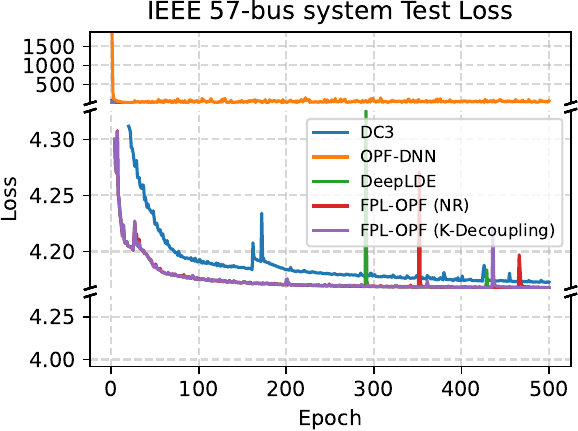}
				\caption{\footnotesize Loss comparison at IEEE 57-bus system.}
				\label{fig:appedix_case57_loss_curve}
				
			\end{subfigure}
			\begin{subfigure}[t]{0.22 \textwidth}
				\centering
				\includegraphics[width= \textwidth]{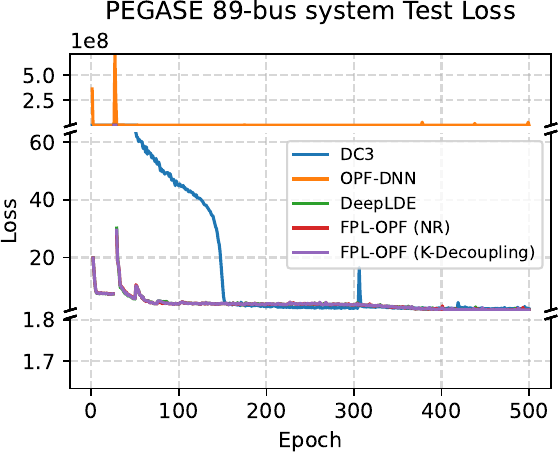}
				\caption{\footnotesize Loss comparison at PEGASE 89-bus system.}
				
				\label{fig:appedix_case89_loss_curve}
			\end{subfigure}
			\begin{subfigure}[t]{0.22 \textwidth}
				\centering
				\includegraphics[width= \textwidth]{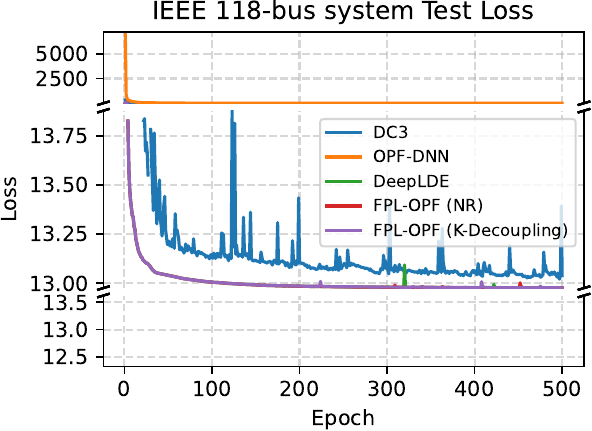}
				\caption{\footnotesize Loss comparison at IEEE 118-bus system.}
				\label{fig:appedix_case118_loss_curve}
			\end{subfigure}
						\begin{subfigure}[t]{0.22 \textwidth}
				\centering
				\includegraphics[width= \textwidth]{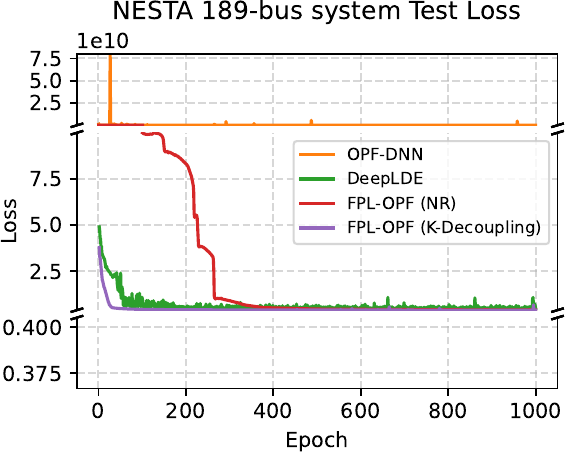}
				\caption{\footnotesize Loss comparison at NESTA 189-bus system.}
				\label{fig:appedix_case189_loss_curve}
			\end{subfigure}
			\caption{Loss comparison on (a) IEEE 57-bus, (b) PEGASE 89-bus, (c) IEEE 118-bus, and (d) NESTA 189-bus systems.}    
			\label{fig:appendix_loss_curve}
		\end{figure*}
		
		% figure test obj
		\begin{figure*}[ht]    
			\centering
			
			\begin{subfigure}[t]{0.22 \textwidth}
				\centering
				\includegraphics[width= \textwidth]{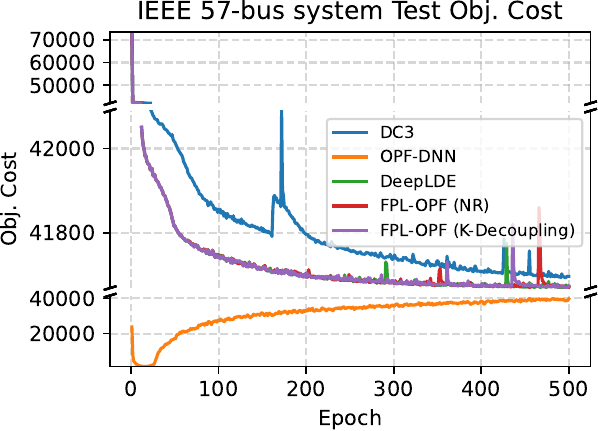}
				\caption{\footnotesize Objective cost comparison at IEEE 57-bus system.}
				\label{fig:appendix_case57_obj_cost}
				
			\end{subfigure}
			\begin{subfigure}[t]{0.22 \textwidth}
				\centering
				\includegraphics[width= \textwidth]{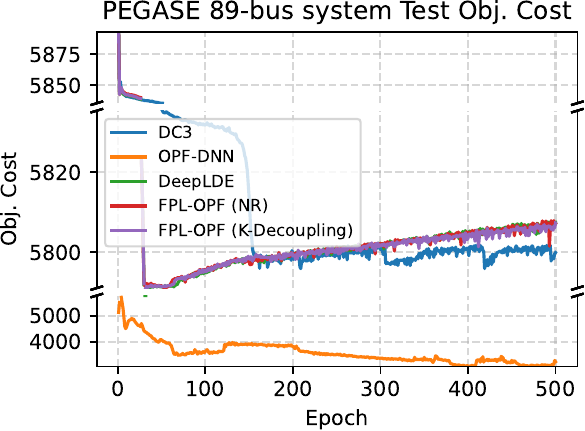}
				\caption{\footnotesize Objective cost comparison at PEGASE 89-bus system.}
				
				\label{fig:appendix_case89_obj_cost}
			\end{subfigure}
			\begin{subfigure}[t]{0.22 \textwidth}
				\centering
				\includegraphics[width= \textwidth]{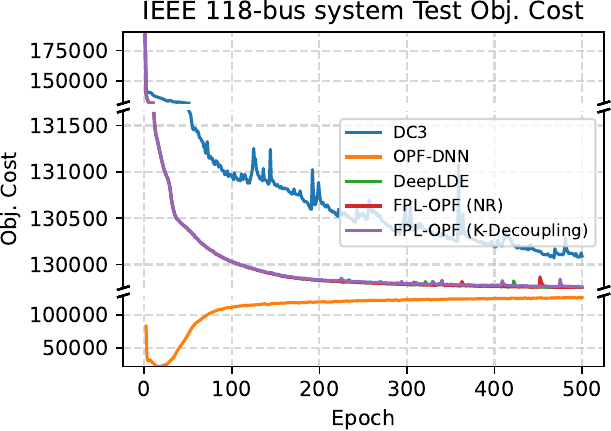}
				\caption{\footnotesize Objective cost comparison at IEEE 118-bus system.}
				\label{fig:appendix_case118_obj_cost}
			\end{subfigure}
						\begin{subfigure}[t]{0.22 \textwidth}
				\centering
				\includegraphics[width= \textwidth]{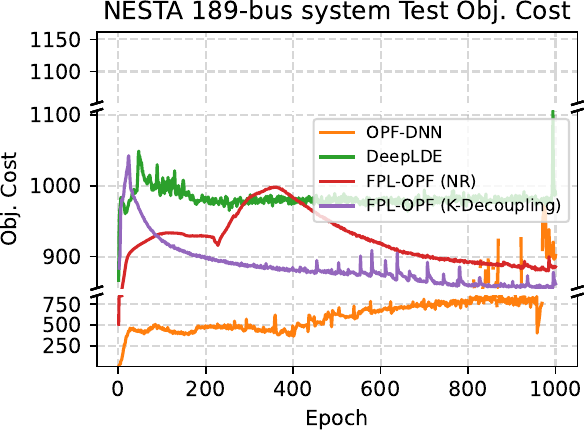}
				\caption{\footnotesize Objective cost comparison at NESTA 189-bus system.}
				\label{fig:appendix_case189_obj_cost}
			\end{subfigure}
			\caption{Objective cost comparison on (a) IEEE 57-bus, (b) PEGASE 89-bus, (c) IEEE 118-bus, and (d) NESTA 189-bus systems.}    
			\label{fig:appendix_obj_cost}
		\end{figure*}
		
\section{Supplementary Experimental Results for PGLib 500-bus System}
\label{sec:appendix_case500}
To further evaluate the scalability and robustness of the proposed framework, we conduct experiments on the PGLib 500-bus system\footnote{\url{https://github.com/power-grid-lib/pglib-opf/blob/master/pglib_opf_case500_goc.m}}, a significantly larger and more complex benchmark comprising 500 buses, 597 branches, and 56 generators. This system introduces a highly constrained optimization space defined by 1,000 equality constraints and 2,418 inequality constraints.
\par
The dataset generation, load profile perturbations, and general training configurations remain consistent with the NESTA 189-bus system setup. However, to accommodate the increased system scale, we expand the neural network architecture to use hidden layers of 6000-6000, reduce the learning rate to 1e-5, and set the outer iterations to 80, resulting in a total of 2,000 training epochs.
\par
In this evaluation, we compare the performance of FPL-OPF (K-Decoupling) exclusively against the traditional MIPS solver. Physics-aware learning baselines relying on exact implicit differentiation or Newton-Raphson steps, such as DC3, DeepLDE, and FPL-OPF (NR), are excluded from this benchmark. The dense matrix operations and large-scale Jacobian inversions required by these methods, combined with the memory overhead of backpropagation, exceed the 24 GB capacity of an NVIDIA RTX 4090D GPU, even at a severely reduced batch size of 16. These Out-Of-Memory failures highlight a critical scalability bottleneck in existing implicit-layer-based approaches.
\par
Conversely, our proposed FPL-OPF (K-Decoupling) successfully scales to this challenging scenario by avoiding Jacobian matrix formations and strictly maintaining an $O(n^2)$ memory and time complexity. As detailed in Table~\ref{tab:appendix_case500_perf} and Figures~\ref{fig:appendix_case500_curves} and \ref{fig:appendix_time_comparision_500}, FPL-OPF (K-Decoupling) trains stably and achieves a highly competitive optimality gap of 0.35\% while strictly satisfying the complex physical constraints. Furthermore, it demonstrates an overwhelming computational advantage, accelerating the inference process by a factor of over $1100 \times$ compared to the MIPS solver, i.e., 0.746 seconds versus 827.748 seconds. These results underscore the superior memory efficiency, computational speed, and practical viability of the K-Decoupling strategy for large-scale power grid operations.
	\begin{table*}[!h]
		\centering
		\caption{Performance Comparison On PGLib 500-bus System}
		\label{tab:appendix_case500_perf}
		\resizebox{0.85\linewidth}{!}{
		\begin{tabular}{lcccccccc}
			\toprule
			Method & \makecell[c]{Eq. Mean\\Mismatch} & \makecell[c]{Eq. Max.\\Mismatch} & \makecell[c]{Eq. Viol.\\Num.} & \makecell[c]{Ineq. Mean\\Mismatch} & \makecell[c]{Ineq. Max\\Mismatch} & \makecell[c]{Ineq. Viol.\\Num.} & \makecell[c]{Objective\\Cost (\$)} & \makecell[c]{Objective\\Gap (\%)} \\
			\midrule
			\multicolumn{9}{c}{\textbf{PGLib 500-bus system}} \\
			\midrule
			MIPS Solver & 0.00e+00 & 0.00e+00 & 0.00e+00 & 0.00e+00 & 0.00e+00 & 0.00e+00 & 72566.66 & 0.00\% \\
			\cline{2-9}
			FPL-OPF (K-Decoupling) & \textbf{1.07e-13} & \textbf{3.87e-12} & \textbf{0.00e+00} & \textbf{1.54e-05} & \textbf{0.0302} & \textbf{0.6550} & \textbf{72821.54} & \textbf{0.35\%} \\
			\bottomrule
		\end{tabular}
	}
	\end{table*}
	
			% figure case 500
	\begin{figure*}[ht]    
		\centering
				\begin{subfigure}[t]{0.22 \textwidth}
			\centering
			\includegraphics[width= \textwidth]{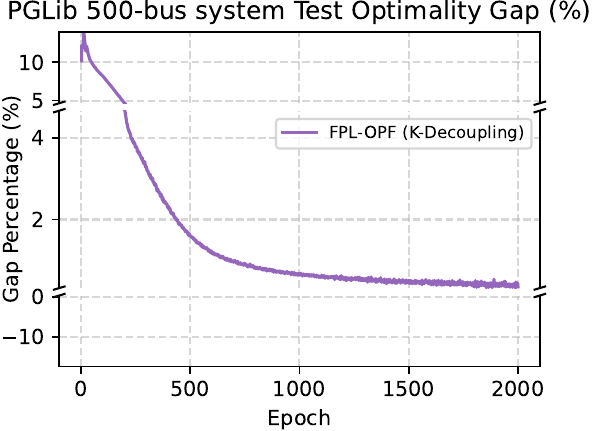}
			\caption{\footnotesize Test Optimality gap at PGLib 500-bus system.}
			
			\label{fig:appedix_case500_opt_gap_case500}
		\end{subfigure}
		\begin{subfigure}[t]{0.22 \textwidth}
			\centering
			\includegraphics[width= \textwidth]{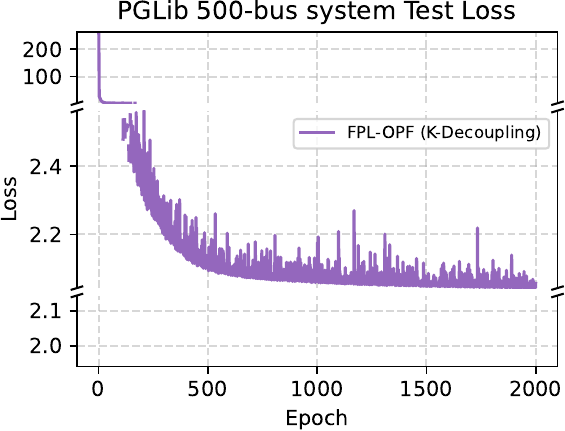}
			\caption{\footnotesize Test loss curve at PGLib 500-bus system.}
			\label{fig:appedix_case500_loss_curve}
			
		\end{subfigure}
		\begin{subfigure}[t]{0.22 \textwidth}
			\centering
			\includegraphics[width= \textwidth]{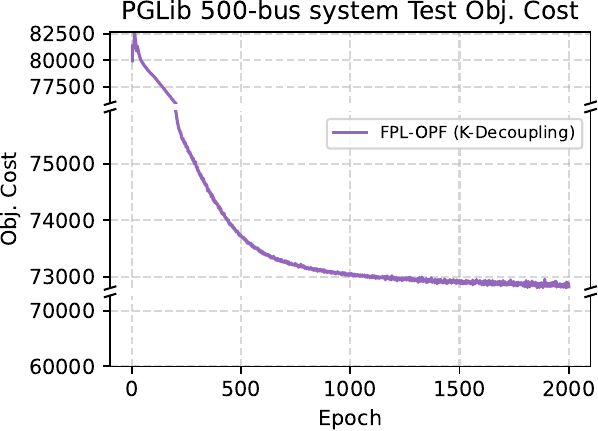}
			\caption{\footnotesize Test objective at PGLib 500-bus system.}
			
			\label{fig:appedix_case500_obj_cost}
		\end{subfigure}
		\\
		\begin{subfigure}[t]{0.22 \textwidth}
			\centering
			\includegraphics[width= \textwidth]{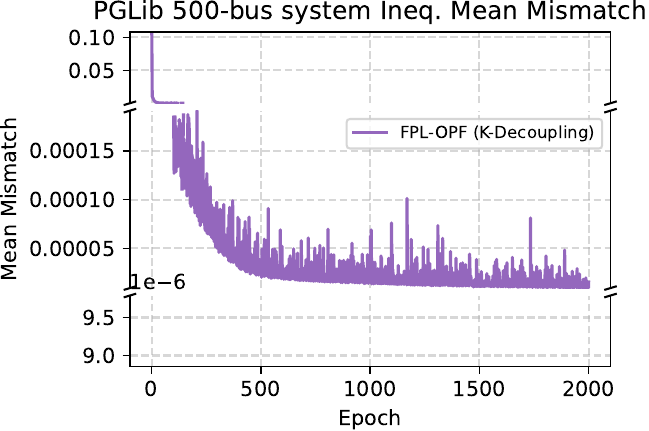}
			\caption{\footnotesize Inequality mean mismatch at PGLib 500-bus system.}
			\label{fig:appedix_case500_ineq_mismatch}
		\end{subfigure}
		\begin{subfigure}[t]{0.22 \textwidth}
			\centering
			\includegraphics[width= \textwidth]{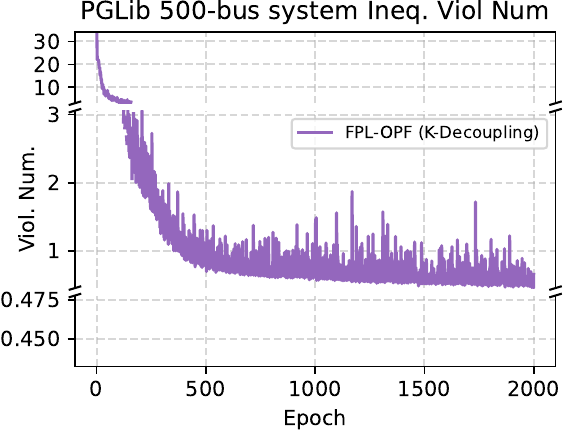}
			\caption{\footnotesize Inequality violation number at PGLib 500-bus system.}
			\label{fig:appedix_case500_ineq_viol_num}
		\end{subfigure}
		\begin{subfigure}[t]{0.22 \textwidth}
			\centering
			\includegraphics[width= \textwidth]{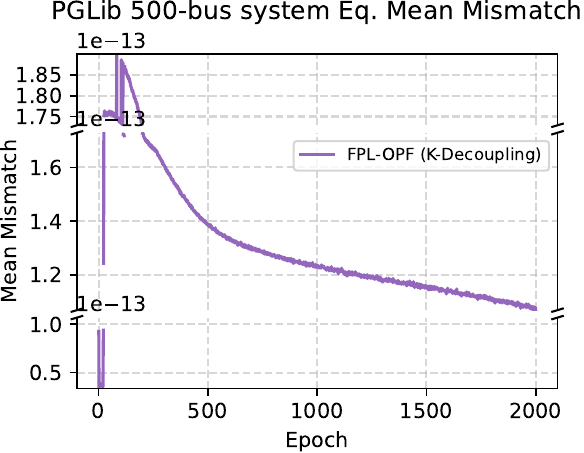}
			\caption{\footnotesize Equality mean mismatch at PGLib 500-bus system.}
			\label{fig:appedix_case500_eq_mismatch}
		\end{subfigure}
		\begin{subfigure}[t]{0.22 \textwidth}
			\centering
			\includegraphics[width= \textwidth]{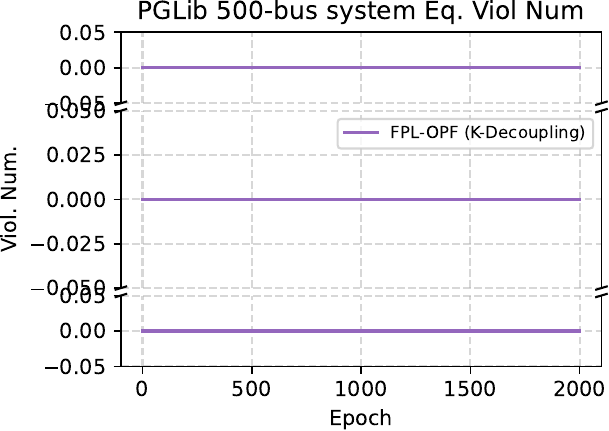}
			\caption{\footnotesize Equality violation number at PGLib 500-bus system.}
			\label{fig:appedix_case500_eq_viol_num}
		\end{subfigure}
		\caption{PGLib 500-bus system training performance visualization.}    
		\label{fig:appendix_case500_curves}
	\end{figure*}
	
	\begin{figure*}[ht]    
		\centering
		\includegraphics[width= 0.3\textwidth]{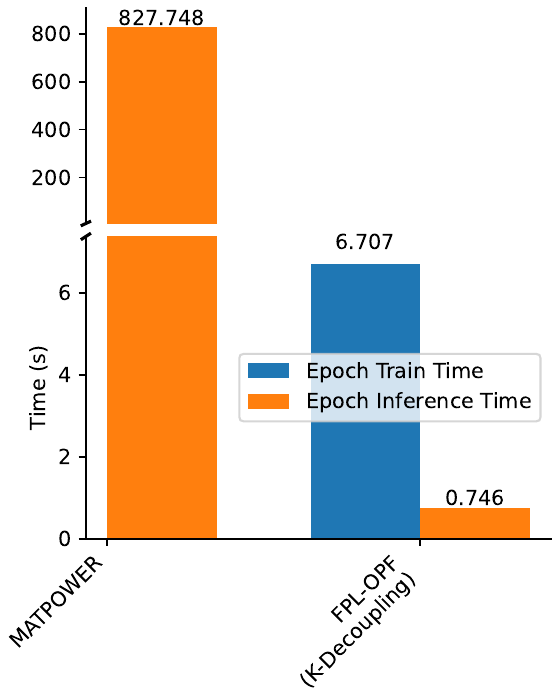}
		\caption{Time-efficiency comparison on PGLib 500-bus system. Blue bars report training time per epoch, and orange bars report inference time on the test set.}    
		\label{fig:appendix_time_comparision_500}
	\end{figure*}
	
	\section{Impact of Iteration Budgets on Exact Implicit Differentiation}
	\label{sec:appendix_ablation_deeplde_cap}
	
	\begin{figure*}[ht]    
		\centering
		\begin{subfigure}[t]{0.22 \textwidth}
			\centering
			\includegraphics[width= \textwidth]{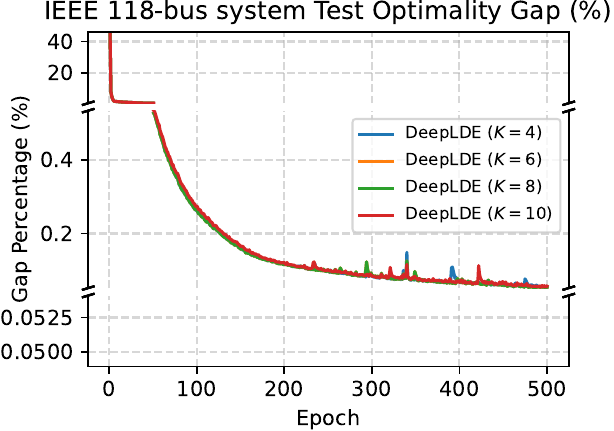}
			\caption{\footnotesize Test optimality gap at IEEE 118-bus system.}
			
			\label{fig:appedix_diff_deeplde_opt_gap_case118}
		\end{subfigure}
				\begin{subfigure}[t]{0.22 \textwidth}
			\centering
			\includegraphics[width= \textwidth]{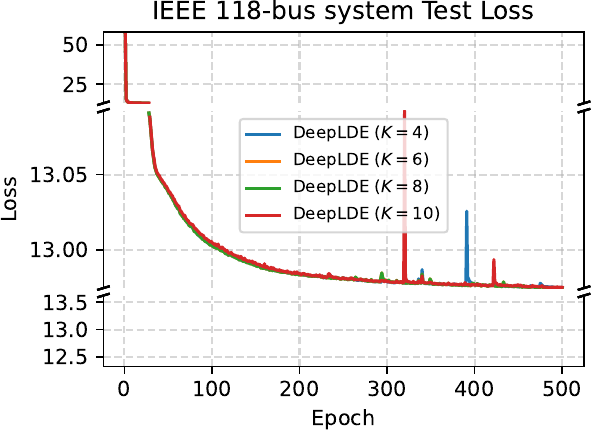}
			\caption{\footnotesize Test loss curve at IEEE 118-bus system.}
			
			\label{fig:appedix_diff_deeplde_loss_curve_case118}
		\end{subfigure}
		\begin{subfigure}[t]{0.22 \textwidth}
			\centering
			\includegraphics[width= \textwidth]{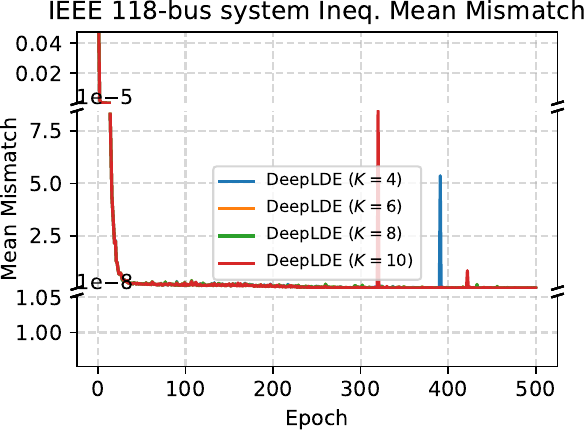}
			\caption{\footnotesize Test Ineq. mean mismatch at IEEE 118-bus system.}
			\label{fig:appedix_diff_deeplde_ineq_mean_mismatch_case118}
		\end{subfigure}
		\begin{subfigure}[t]{0.20 \textwidth}
			\centering
			\includegraphics[width= \textwidth]{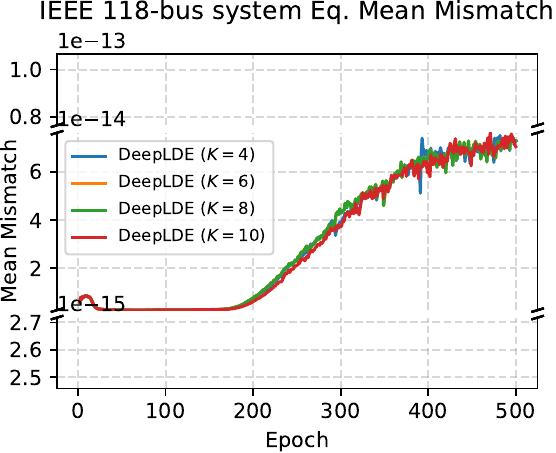}
			\caption{\footnotesize Test Eq. mean mismatch at IEEE 118-bus system.}
			\label{fig:appedix_diff_deeplde_eq_mean_mismatch_case118}
			
		\end{subfigure}
		\caption{Training convergence and constraint mismatch of DeepLDE under varying iteration budgets $K$ on the IEEE 118-bus system. }    
		\label{fig:appendix_diff_deeplde}
	\end{figure*}

		\begin{figure*}[ht]    
		\centering
		\includegraphics[width= 0.35\textwidth]{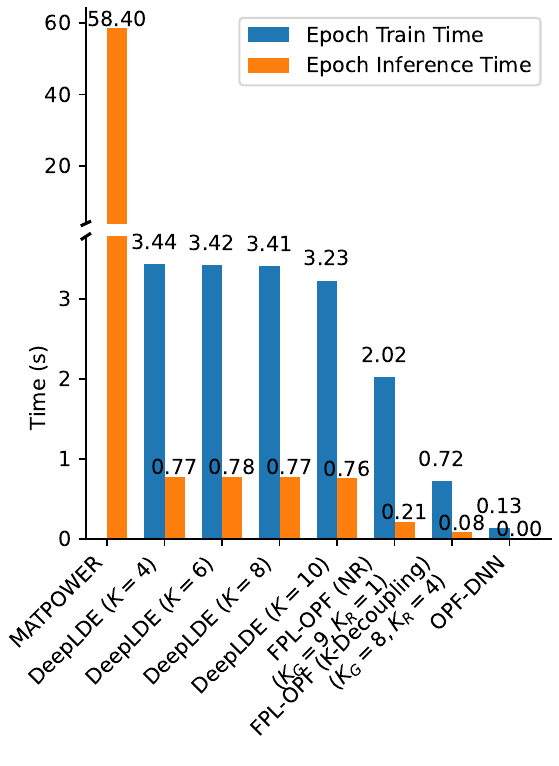}
		\caption{Computational time efficiency of DeepLDE under varying iteration budgets on the IEEE 118-bus system. Blue bars report training time per epoch, and orange bars report inference time on the test set.}    
		\label{fig:appendix_time_diff_deeplde}
	\end{figure*}
	To justify the power flow solver configurations detailed in Table~\ref{tab:detailed_pf_solver_setting}, we conduct an ablation study investigating the impact of the forward-pass iteration budget on the exact implicit differentiation baseline, DeepLDE. All experimental configurations and neural network hyperparameters strictly follow the setup described in Section~\ref{subsec:exp_setup}. The sole parameter varied is the maximum number of iterations allocated to the solver, tested across $K \in \{4, 6, 8, 10\}$ on the IEEE 118-bus system.
	\par
	Figure~\ref{fig:appendix_diff_deeplde} illustrates the training dynamics under these varying iteration budgets, tracking the optimality gap, test loss, and physical constraint mismatches. The empirical results demonstrate that a larger iteration budget fundamentally enhances convergence stability. 
	Most notably, as depicted in Figure~\ref{fig:appedix_diff_deeplde_eq_mean_mismatch_case118}, extending the iteration cap to $K=10$ yields a consistently better and more stable equality mean mismatch compared to lower thresholds, such as $K=4$. 
	Furthermore, Figure~\ref{fig:appendix_time_diff_deeplde} evaluates the computational overhead associated with increasing this budget.
	The comparison reveals that scaling the iteration cap from 4 to 10 does not increase training and inference times per epoch.
	Given the marked improvements in numerical stability and physical feasibility, an iteration budget of 10 is a robust configuration for exact implicit differentiation methods in our evaluations.
		\section{Supplementary Experimental Results for Section~\ref{subsec:results_analysis}}
		\label{sec:appendix_exp_results}
		
		To further evaluate the stability and convergence behavior of the proposed framework, we analyze key performance metrics over the course of training. Figures \ref{fig:appendix_eq_mean_mismatch}--\ref{fig:appendix_obj_cost} illustrate the trajectories of equality mismatches, inequality mismatches, violation numbers, test losses, and objective costs on the IEEE 57-bus, PEGASE 89-bus, IEEE 118-bus, and NESTA 189-bus systems.
		\par
		\subsection{Constraint Satisfaction} Figures \ref{fig:appendix_eq_mean_mismatch}, \ref{fig:appendix_ineq_mean_mismatch}, and \ref{fig:appendix_ineq_viol_num}  depict the equality mean mismatch, inequality mean mismatch, and the number of inequality violations per epoch, respectively. 
		The results highlight a remarkable difference in convergence behavior.
		First, the OPF-DNN method fails to converge to the physical region, exhibiting mismatches orders of magnitude higher than other approaches. This confirms the limitations of the soft penalty method in satisfying strict physical constraints.
		The physics-aware methods, i.e., DC3, DeepLDE, and our FPL-OPF variants, exhibit a rapid and monotonic decrease in inequality constraint violations, effectively reaching the feasible region within the early stages of training.
		\par
		Notably, the FPL-OPF (K-Decoupling) method matches the convergence speed and stability of the exact implicit differentiation method, e.g., DeepLDE, despite using an approximate implicit gradient.
%		As observed in Figure \ref{fig:appedix_case118_ineq_mean_mismatch}, the mismatch for the FPL-OPF framework rapidly descends to the $10^{-7}$ to $10^{-8}$ range.
		As observed across the benchmarks, particularly in more challenging NESTA 189-bus system in Figures \ref{fig:appedix_case189_eq_mean_mismatch} and \ref{fig:appedix_case189_ineq_mean_mismatch}, the mismatch for the FPL-OPF framework rapidly descends to a highly accurate range.
		This empirical evidence across increasingly complex power grids demonstrates that the gradients derived from the K-Decoupling approximation are consistently robust and accurate enough to drive the neural network parameters to a strictly feasible region, matching the fidelity of computationally expensive exact implicit gradients.
		\subsection{Optimality and Loss Convergence} 
		The training dynamics of the test loss and objective cost are presented in Figures \ref{fig:appendix_loss_curve} and \ref{fig:appendix_obj_cost}.
		On smaller systems, the FPL-OPF framework demonstrates smooth convergence trajectories comparable to DeepLDE.
		However, on the large-scale NESTA 189-bus system,  FPL-OPF exhibits significantly more stable and superior convergence behavior shown in Figures \ref{fig:appedix_case189_loss_curve} and \ref{fig:appendix_case189_obj_cost} compared to DeepLDE, which struggles with larger optimality gaps. While OPF-DNN appears to achieve lower objective costs, this is an artifact of its failure to satisfy physical constraints. Conversely, FPL-OPF consistently minimizes the objective function while strictly adhering to the feasible region, validating the robustness and scalability of the proposed fixed-point layer in balancing optimality and physical constraints even for highly complex power grids.

	\end{document}